\newcommand{\attcolor}{black}
\definecolor{mypurple}{HTML}{E8EEDD}
\newtheorem{definition}{Definition}
\newtheorem{assumption}{Assumption}
\newtheorem{remark}{Remark}
\newtheorem{theorem}{Theorem}
\newtheorem{corollary}{Corollary}
\newtheorem{lemma}{Lemma}
\begin{document}

\title{Distributed Bilevel Optimization with Dual Pruning for Resource-limited Clients}

\author{Mingyi Li, \and Xiao Zhang, \and Ruisheng Zheng, \and Hongjian Shi, \and Yuan Yuan, \and Xiuzhen Cheng, \and and Dongxiao Yu
\thanks{Mingyi Li, Xiao Zhang, Hongjian Shi, Xiuzhen Cheng and Dongxiao Yu are with the School of Computer Science and Technology, Shandong University, Qingdao 266237, China.\protect ~Email: limee@mail.sdu.edu.cn; xiaozhang@sdu.edu.cn; hongjian@mail.sdu.edu.cn; xzcheng@sdu.edu.cn; dxyu@sdu.edu.cn.}
\thanks{Ruisheng Zheng is with the School of Cyber Science and Technology, Shandong University, Qingdao 266237, China.\protect ~Email: rex7@mail.sdu.edu.cn.}
\thanks{Yuan Yuan is with the School of Software $\&$ Joint SDU-NTU Centre for Artificial Intelligence Research (C-FAIR), Shandong University, Jinan, 250000, China.\protect ~Email: yyuan@sdu.edu.cn.}}

\markboth{Journal of \LaTeX\ Class Files,~Vol.~14, No.~8, August~2021}%
{Shell \MakeLowercase{\textit{et al.}}: A Sample Article Using IEEEtran.cls for IEEE Journals}

\IEEEpubid{0000--0000/00\$00.00~\copyright~2021 IEEE}

\maketitle

\begin{abstract}
With the development of large-scale models, traditional distributed bilevel optimization algorithms cannot be applied directly in low-resource clients. 
The key reason lies in the excessive computation involved in 
optimizing both the lower- and upper-level functions. 
Thus, we present the first resource-adaptive distributed bilevel optimization framework with a second-order free hypergradient estimator, which allows each client to optimize the submodels adapted to the available resources.
Due to the coupled influence of partial outer parameters $x$ and inner parameters $y$, it's challenging to theoretically analyze the upper bound regarding the globally averaged hypergradient for full model parameters. The error bound of inner parameter also needs to be reformulated since the local partial training.
The provable theorems show that both \textit{RABO} and \textit{RAFBO} can achieve an asymptotically optimal convergence rate of $\mathcal{O}(1/\sqrt{\mathcal{C}_x^*Q})$, which is dominated by the minimum coverage of the outer parameter $\mathcal{C}_x^*$. 
Extensive experiments on two different tasks 
demonstrate the effectiveness and computation efficiency of our proposed methods. 
\end{abstract}

\begin{IEEEkeywords}
Distributed bilevel optimization, Limited resources, Large-scale model, Heterogeneity, Convergence analysis
\end{IEEEkeywords}

\section{Introduction}
\IEEEPARstart{R}{ecently}, distributed bilevel optimization has arisen as a powerful tool to capture various modern machine learning problems~\cite{gao2023convergence, HeHHLWY24compress}.
For instance, FedCSS~\cite{li2024joint} and Bi-level~\cite{huang2022federated} formulate the distributed nodes selection and global model construction as a distribted bilevel optimization problem. Bi-AC~\cite{zhang2020bi} proposes a bi-level reinforcement learning method while unequal agents can act simultaneously and distributedly. 
And distributed meta learning~\cite{fallah2020personalized, chen2018federated} is another important application of bilevel optimization to learn better model initializations. 
With the prosperity of LLMs~\cite{minaee2024large,zhang2024urban}, FedBioT~\cite{wu2024fedbiot} formulates the privacy-concerned 
fine-tuning as a bi-level optimization problem to mitigate the negative effect of data discrepancy. 
Generally, the optimization objectives of classic distributed bilevel problems ~\cite{conf/icml/fednest, conf/icml/fedmbo} can be  formalized as follows: 
\begin{equation}
\scalebox{0.9}{$
\begin{aligned}\label{objective}
&\min_{x\in\mathbb{R}^{d_1}} \Phi(x)=f(x, y^*(x)) : = 
\frac{1}{n} \sum_{i=1}^n f_i(x, y^*(x))
\\& \;\;\mbox{s.t.} \quad y^*(x)\in \arg\min_{y\in\mathbb{R}^{d_2}} g(x,y):=
\frac{1}{n}\sum_{i=1}^n g_i(x,y),
\end{aligned}
$}    
\end{equation}

where $f_i(x, y) = \mathbb{E}[f_i(x, y;\zeta^i)],\, g_i(x, y) = \mathbb{E}[g_i(x, y;\xi^i)]$ are stochastic upper- and lower-level loss functions of client $i$, and $n$ is the total number of clients. 

\begin{figure}
\centering
  \includegraphics[width=0.45\textwidth]{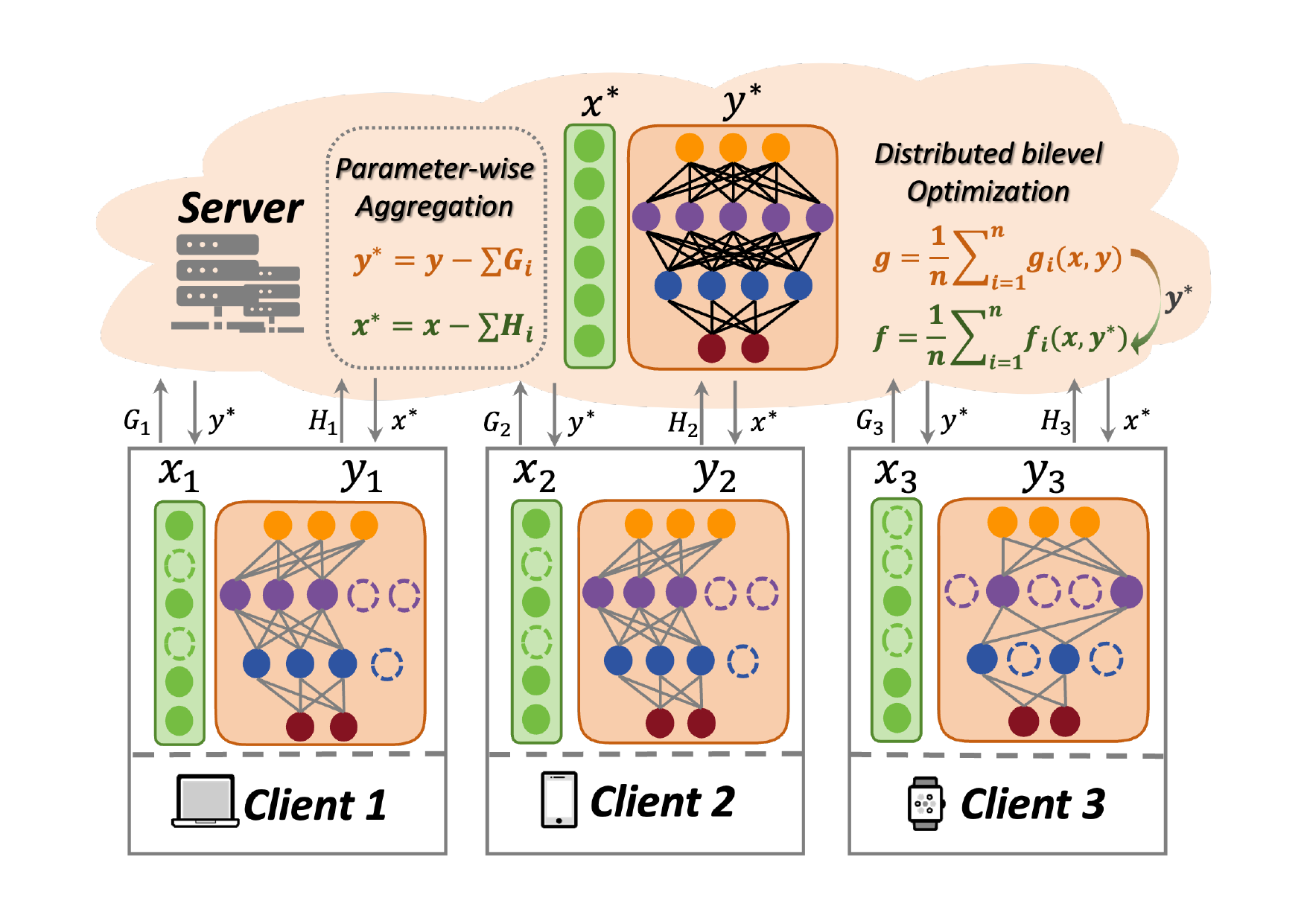}
  \caption{Illustration diagram of proposed resource-adaptive distributed bilevel optimization framework \textit{RABO}.}
  \label{fig:rabo}
\end{figure}

However, in real-world scenarios, with the rise of large-scale models~\cite{he2016deep,wang2024explicit,wen2023byzantine}, both the dimensions of outer parameters $x$ and inner parameters $y$ might become enormous, making it difficult to calculate not only the inner gradient $\nabla_y g_{i}(x,y)$, but also the hypergradient on resource-limited clients. Even the iterative differentiation (ITD) method~\cite{shaban2019truncated,domke2012generic} and the approximation implicit differentiation (AID) method~\cite{xiao2023communication,yang2024simfbo} 
have been widely used in both centralized and distributed scenarios, both of them still include the computation of higher-order derivatives and the matrix-vector multiplication, making the hypergradients estimation infeasible 
when local computation resource are limited. Therefore the main focus of our work arises: 
\textit{How to optimize the distributed bilevel problem with large-scale inner- and outer-models under resource-limited scenarios?} 
\IEEEpubidadjcol

We consider the following resource-aware bilevel optimization problem,
\begin{equation}
\scalebox{0.9}{$
\begin{aligned}\label{objective2}
&\min_{x\in\mathbb{R}^{d_1}} \Phi(x)=f(x, y^*(x)): = 
\frac{1}{n} \sum_{i=1}^n f_i(x\odot m_x^i, y^*(x)\odot m_y^i) 
\\& \mbox{s.t.} ~y^*(x)\in \arg \min_{y\in\mathbb{R}^{d_2}} g(x,y):=
\frac{1}{n}\sum_{i=1}^n g_i(x\odot m_x^i,y\odot m_y^i),
\end{aligned}
$}    
\end{equation}
where $m_x^i$ and $m_y^i$ denote the local masks of parameter $x$ and $y$ separately for client $i$, which can be dynamically adjusted based on available resources, $x\odot m_x^i$ and $y\odot m_y^i$ denote inner- and outer- submodel separately for client $i$, $f_i(\cdot, \cdot) = \mathbb{E}[f_i(\cdot, \cdot;\zeta^i)],\, g_i(\cdot, \cdot) = \mathbb{E}[g_i(\cdot, \cdot;\xi^i)]$ are stochastic upper- and lower-level loss functions of client $i$, and $n$ is the total number of clients. 
Nevertheless, when solving this bilevel optimization, some tricky challenges arise:
\textbf{(1) Complex coupling between inner-outer submodel training.} Training submodels introduces inherent randomness and variability in model parameters used in the both inner and outer optimization. This stochasticity creates a non-deterministic dependency between inner solutions and outer variables, complicating gradient estimation and bilevel coordination. 
\textbf{(2) Inaccurate surrogate approximation.} In bilevel optimization, the global surrogate variable $y$ is typically used to approximate the unavailable true inner optimal solution $y^*$. Submodel training introduces additional approximation errors and biases, further complicating the surrogate’s fidelity.
\textbf{(3) Intractable hypergradient computation.} The ITD and AID methods still include the computation of higher-order derivatives and the matrix-vector multiplication. It's challenging due to the submodel training introducing stochasticity and approximation errors that amplify numerical instability in higher-order derivative computations. And repeated Hessian-vector or Jacobian-vector products are still computationally intensive.

To fill these gaps, we propose a new resource-adaptive distributed bilevel optimization framework called \textit{RABO}.
Instead of training the full model, each client can adaptively train the submodels based on their local resources and reduce the total computations in optimization. 
The process in each global iteration is shown in Figure~\ref{fig:rabo}. We use the double-loop communication framework, each client first optimizes partial inner parameters $y_i$ and uploads the parameter variation, the server aggregates and distributes the global parameters $y^*$, then for outer parameters $x$. 
In order to further reduce the computation cost of hypergradient, we propose a partial hypergradient approximation method called \textit{RAFBO}. We use the idea of finite differences to approximate the partial hypergradient. It's second-order free, which avoids the computation of the Jaccobian and Hessian matrices. 
Otherwise, in theoretical analysis, we rigorously handle the coupled influence of partially trained $x$ and $y$ on the global hypergradient in the theoretical analysis. 
And we also reformulate the error bound between the global surrogate $y$ and the unavailable $y^*$ also needs to be reformulated due to the local partial training.
Thus, we establish an asymptotically optimal convergence rate $\mathcal{O}(1/\sqrt{\mathcal{C}^*_xQ})$ of proposed algorithms, where $Q$ is the global iterations and $\mathcal{C}^*_x$ is the minimum covering number of outer parameters defined in Section~\ref{sec:theory}. The extensive experimental results confirm the effectiveness of our proposed \textit{RABO} and \textit{RAFBO}. 

Our contributions can be summarized as follows:
\begin{itemize}
\item To the best of our knowledge, we are the first to propose the resource-adaptive distributed bilevel optimization framework. Each client can choose the appropriate inner and outer submodel adapted to their resource, then solve the global bilevel optimization collaboratively. 
\item To further alleviate the computational burden, we propose a second-order free hypergradient estimator, that uses finite differences to estimate partial hypergradient. It is possible to reduce the computation while keeping the estimation accuracy.
\item We provide a convergence analysis for the proposed \textit{RABO} and \textit{RAFBO} and carefully solve the coupled influence of partial outer parameters $x$ and inner parameters $y$. We prove that both of them can achieve an asymptotically optimal convergence rate $\mathcal{O}(1/\sqrt{\mathcal{C}^*_xQ})$ under the non-convex-strongly-convex condition. Existing studies would become a special case of our learning paradigm.
\item We conduct extensive experiments on two practical bilevel optimization tasks, by comparing with state-of-the-art single-level submodel training and distributed bilevel optimization algorithms. Results show our proposed methods achieve up to  $5\times$  and $4\times$ reduction in computation cost and communication cost compared to FedMBO. 
\end{itemize}




\section{Related Work}
Distributed bilevel optimization has emerged as a highly promising paradigm in collaboratively solving bilevel optimization. FedNest~\cite{conf/icml/fednest} is the first to focus on general nested problems and provides a convergence rate guarantee under the heterogeneity setting. And based on it, FedMBO~\cite{conf/icml/fedmbo} achieves the first linear speedup result. Considering the communication cost, AggITD~\cite{xiao2023communication} leverages the idea of iterative differentiation that computes the hypergradient through intermediate iterates of the lower-level updates on $y$, which achieves the same sample complexity as existing AID-based approaches while reducing the communication rounds. Further, SimFBO~\cite{yang2024simfbo} introduces an additional global quadratic programming and updates three sub-problem in clients simultaneously, thus per iteration only needs one communication round. Although ShroFBO~\cite{yang2024simfbo} considers the system-level heterogeneity, it claims that different computing capabilities lead to an unequal number of
local updates, which is different from the challenge that we consider. FedBiOAcc~\cite{li2024communication} leverages momentum-based variance reduction to achieve optimal convergence rate. Moreover, bilevel optimization has also been studied in other distributed settings, such as decentralized bilevel optimization~\cite{chen2023decentralized} and asynchronous bilevel optimization~\cite{jiaoasynchronous}.
However, existing distributed bilevel optimization algorithms require sufficient computation resources on clients, which cannot be applied in resource-limited scenarios.

\begin{table}[t]
 \caption{\textcolor{\attcolor}{Frequently used notations and corresponding descriptions}}
 \label{table:notation}
 \centering
	\begin{tabular}[t]{p{1.3cm} p{6.2cm}}
		\hline
		\textcolor{\attcolor}{Notations} & \textcolor{\attcolor}{Descriptions} \\ 
		\hline
        $x$ & outer parameter\\
        $y$ & inner parameter\\
        $n$ & total number of clients\\
        $f_i(\cdot,\cdot)$ & stochastic upper-level loss functions of client $i$\\
        $g_i(\cdot, \cdot)$&stochastic lower-level loss functions of client $i$\\
        $\Phi(x)$& Overall loss functions respect to $x$, $\Phi(x)=f(x, y^*(x)) $\\
        $m_{q,x}^i$& the mask of outer parameter in client $i$ in global iteration $q$ \\
        $m_{q,y}^i$& the mask of inner parameter in client $i$ in global iteration $q$ \\
        $R_i$ & resource constrains of client $i$\\
        $G$ & Inner gradient\\
        $\mathcal{H}$ & Hypergradient\\
        $\nabla g(x,y)$&The derivative of $g(x,y)$ with respect to y\\
        $\nabla_x f(x,y)$&The derivative of $f(x,y)$ with respect to $x$\\
        $\nabla_x f(x,y)$&The derivative of $f(x,y)$ with respect to $y$\\
        $\nabla_{x,y}^2 g(x,y)$&Hessian matrix of the $g(x,y)$ with respect to $x$ and $y$\\
        $\nabla_{y,y}^2 g(x,y)$&Hessian matrix of $g(x, y)$ with respect to $y$\\
		$\|\cdot\|$ & the vector $\ell_2$ norm or the matrix spectral norm depending on the argument\\
        $\mathcal{S}_x$ & the set of all outer parameters $x$\\
        $\mathcal{S}_y$ & the set of all inner parameters $y$\\
        $\mathcal{K}_{q,x}$ & the trained outer parameters set in global iteration $q$\\
        $\mathcal{K}_{q,y}$ & the trained inner parameters set in global iteration $q$\\
        $|\mathcal{K}_{q,x}|$ & the number of trained outer parameters in global iteration $q$\\
        $|\mathcal{K}_{q,y}|$ & the number of trained inner parameters in global iteration $q$\\
        $N_{q,x}^s$ & the set of clients training outer parameter $s$ in global iteration $q$\\
        $N_{q,y}^k$ & the set of clients training inner parameter $k$ in global iteration $q$\\
        $\mathcal{C}_{q,x}^s$ & $\mathcal{C}_{q,x}^s=|N_{q,x}^s|$ the number of clients in $N_{q,x}^s$\\
        $\mathcal{C}_{q,y}^k$ & $\mathcal{C}_{q,y}^k=|N_{q,y}^k|$ the number of clients in $N_{q,y}^k$\\
        $\mathcal{C}^*_x$   & minimum covering clients: $\mathcal{C}^*_x=\min\limits_{q,s}\mathcal{C}_{q,x}^s$, $s\in \mathcal{K}_{q,x}$, $\forall q$ \\
        $\mathcal{C}^*_y$   & minimum covering clients: $\mathcal{C}^*_y=\min\limits_{q,k}\mathcal{C}_{q,y}^k$, $k\in \mathcal{K}_{q,y}$, $\forall q$ \\
        $\alpha$& the learning rate of outer parameter\\
        $\beta$& the learning rate of inner parameter\\
        $\mathcal{P}$&The coordinate set\\
        $\mathbb{e}_p$&A coordinate vector\\
		\hline
	\end{tabular}
\end{table}

\begin{algorithm}[t]
  \caption{Recourse-Adaptive Distributed Stochastic Bilevel Optimization(RABO)}
  \label{alg:rabo} 
  \begin{algorithmic}[1]
    \STATE {\bfseries Input:} number of clients $n$, initial point $(x_0, y_0)$, mask policy ${M}(\cdot, R_n)$
    \FOR{$q = 0,1, \ldots, Q-1$}

    \FOR{$i \in \left[n\right]$ in parallel}
    \STATE \begin{small}Generate mask $ m^i_{q,x}=\mathcal{M}(x_{q},R_i)$ , $m^i_{q,y}=\mathcal{M}(y_{q}, R_i)$
    \STATE Generate submodel $ x^i_{q}=x_{q}\odot m^i_{q,x}, y^i_{q,0}=y_{q}\odot m^i_{q,y}$\end{small}

    \FOR{epoch $t = 0,1, \ldots, T-1$}
    \STATE Compute $G_{q,t}^i=\nabla g_i(x^i_{q}, y^i_{q,t};\xi_{t}^{i})\odot m^i_{q,y}$
    \STATE $y^i_{q,t+1}=y^i_{q,t}-\beta G_{q,t}^i$
    \ENDFOR
    
    \STATE $G_{q}^i=\frac{y^i_{q,T}-y^i_{q,0}}{\beta}$
    \ENDFOR
    \FOR{parameters $k$ to $\mathcal{S}_{y}$ in $y$}
    \STATE Find $N_{q,y}^{k}=\{i:m_{q,y}^k=1\}$
    \STATE $\mathcal{C}_{q,y}^{k}=|N_{q,y}^{k}|$
    \IF{$\mathcal{C}_{q,y}^{k}=0$}
    \STATE Update $y_{q+1}^k= y_{q}^k$
    \ELSE
    \STATE $y_{q+1}^k = y_{q}^k-\beta\frac{1}{\mathcal{C}_{q,y}^{k}} \sum_{n\in N_{q,y}^{k}}G_{q}^{i,k}$
    \ENDIF
    \ENDFOR
    \FOR{$i \in \left[n\right]$ in parallel}
    \STATE Generate submodel $y^i_{q+1}=y_{q+1}\odot m^i_{q,y}$
    \STATE \colorbox{cyan!30}{Compute hypergradient $\mathcal{H}_q^i=\nabla f_i(x_q^i, y_{q+1}^i;\zeta)$}
    
    \ENDFOR
    \FOR{parameters $s$ to $\mathcal{S}_{x}$ in $x$}
    \STATE Find $N_{q,x}^{s}=\{i:m_{q,x}^s=1\}$
    \STATE $\mathcal{C}_{q,x}^{s}=|N_{q,x}^{s}|$
    \IF{$\mathcal{C}_{q,x}^{s}=0$}
    \STATE Update $x_{q+1}^s= x_{q}^s$
    \ELSE
    \STATE $x_{q+1}^s = x_q^s- \alpha \frac{1}{\mathcal{C}_{q,x}^{s}} \sum_{n\in N_{q,x}^{s}} \mathcal{H}_q^i$
    \ENDIF
    \ENDFOR
    \ENDFOR
  \end{algorithmic}
\end{algorithm}

\section{Algorithm Design}
In this section, in order to reduce the training burden on resource-limited clients and make it possible to solve bilevel optimization problem, we propose a \textbf{R}esource-\textbf{A}daptive distributed \textbf{B}ilevel \textbf{O}ptimization framework called \textit{\textbf{RABO}}, which can adaptively train both the inner and outer parameters according to the clients' resources and gain the global model. Additionally, we introduce a new partial hypergradient efficient approximation method to further reduce the computation and memory overhead. The detailed descriptions are shown as follows.

\subsection{RABO}
The \textbf{Recourse-Adaptive Distributed Stochastic Bilevel Optimization (RABO)} algorithm is designed to solve bilevel optimization problems in a distributed setting, particularly for models with client heterogeneity and resource constraints. As shown in Algorithm~\ref{alg:rabo}, the server sends the latest full model to all clients, each client generates an adaptive submodel based on their resources. After the inner and outer partial parameters nested local updates, the server aggregates updates for each parameter. The above process will repeat until the model converges. The detailed algorithm proceeds in the $q$-th global iteration as follows:

\textbf{Local Submodel Generation.} 
At each global iteration \( q \), the server sends the latest global model $(x_q, y_q)$ to clients, each client uses the resource-adaptive mask policy $\mathcal{M}(\cdot, R_i)$ to generate masks, where $R_i$ is the resource constrains of the client $i$.
\begin{equation}
    m^i_{q,x}=\mathcal{M}(x_{q},R_i),\quad m^i_{q,y}=\mathcal{M}(y_{q}, R_i).
\end{equation}
The mask $m^i_{q,x}\in\{0,1\}^{d_1}$,  $m^i_{q,y}\in\{0,1\}^{d_2}$, each element denotes whether the parameter is in the local submodel or not. 
According to the mask, we use a structured pruning method to construct the corresponding submodel for further training.
\begin{equation}
    x^i_{q} = x_q \odot m^i_{q,x},\quad  y^i_{q,0} = y_q \odot m^i_{q,y},
\end{equation}

where $\odot$ operation denotes the structured pruning. Considering the heterogeneous dynamics of client resources, masks can be changed in different global iterations. Overall, we define the full inner parameters as $\mathcal{S}_{y}$, full outer parameters as $\mathcal{S}_{x}$ and trained ones as $\mathcal{K}_{q,y}, \mathcal{K}_{q,x}$.
Here, considering the actual application, we can determine whether to apply the mask for both inner and outer parameters or not. For example, when applying \textit{RABO} to hyper-parameter optimization, each client must optimize the full hyper-parameters $x$, thus, we can only prune inner parameters to match the local resource.

\textbf{Inner Submodel update.}
Each client calculates the inner gradient  $G_{q,t}^i$ based on the inner function $g_i(x^i_{q}, y^i_{q,t})$ and updates the inner submodel with $T$ epochs: 
\begin{equation}
    G_{q,t}^i = \nabla g_i(x^i_{q}, y^i_{q,t}; \xi_t^i) \odot m^i_{q,y},\quad y^i_{q,t+1} = y^i_{q,t} - \beta G_{q,t}^i   .
\end{equation}
After completing the local updates, compute the gradient \( G_q^i \) for each client \( i \) as:
$G_q^i = (y^i_{q,T} - y^i_{q,0})/\beta$.
This gradient is used for the global inner optimization step.

\textbf{Inner Parameter-wise Gradient Aggregation}
For each parameter $k \in \mathcal{S}_y$ in $y$, identify the set of clients $N_{q,y}^k=\{ i:m_{q,y}^k=1\}$ that contribute to the update of this parameter, the number of these clients denotes as $\mathcal{C}_{q,y}^k$. The aggregated gradient is then used to update $y_{q+1}^k$ as:
\begin{equation}
y_{q+1}^k = y_q^k - \beta \frac{1}{\mathcal{C}_{q,y}^k} \sum_{i \in N_{q,y}^k} G_q^{i,k},
\end{equation}
where $\beta$ is the inner learning rate. If no clients contribute to a particular parameter, it remains unchanged. Thus, we gain the updated inner parameter $y_{q+1}$ to approximate the minimizer $y^*(x)$ of lower-level problem.

\textbf{Hypergradient Calculation.} 
The server broadcasts the updated $y_{q+1}$, and the clients use it to calculate the hypergradient.
\begin{align}
\nabla f_i(x_q^i, y_{q+1}^i) = &\nabla_x f_i(x_q^i, y_{q+1}^i)- \nabla^2_{xy} g_i(x_q^i, y_{q+1}^i)\\
&[\nabla^2_{yy} g_i(x_q^i, y_{q+1}^i)]^{-1}\nabla_y f_i(x_q^i, y_{q+1}^i).\nonumber\label{surrogate_f}
\end{align}
Due to the submodels being trained within each client, 
the dimensions of parameters are much smaller than the full model, which reduces the computational burden. 

\textbf{Outer Parameter-wise Gradient Aggregation.} We use the same parameter-wise gradient aggregation as inner update to compute the aggregated hypergradients and update the outer parameters. 
\begin{equation}
 x_{q+1}^s = x_q^s - \alpha \frac{1}{\mathcal{C}_{q,x}^s} \sum_{i \in N_{q,x}^s} \nabla f_i(x_q^i, y_{q+1}^i; \zeta), 
\end{equation}
where \( \alpha \) is the outer learning rate, and \( \mathcal{C}_{q,x}^s \) denotes the number of clients contributing to the update.

\begin{algorithm}[t]
  \caption{\colorbox{cyan!30}{Second-Free Hypergradient Approximator}}
  \label{alg:approximation} 
  \begin{algorithmic}[1]
    \STATE {\bfseries Input:} submodel $x_q^i, y_{q+1}^i$, perturbation vector $\mu$
    \STATE {\bfseries Output:} Approximated local hypergradient $\mathcal{H}_i$
    \STATE Calculate and sample coordinate set $\mathcal{P}_{q}^i$ of trained submodel $x_q^i$.
    \FOR{$\textbf{e}_p\in\mathcal{P}_{q}^i$}
    \STATE $\delta(x_q,\textbf{e}_p, \xi)=\frac{\nabla_y g(x_q^i+\mu\textbf{e}_p,y_{q+1}^i, \xi)-\nabla_y g(x_q^i,y_{q+1}^i, \xi)}{\mu}$
    \ENDFOR
    \STATE $\mathcal{H}_i=\hat{\nabla} f_i(x_q,y_{q+1}, \zeta)=\nabla_x f_i(x_q^i,y^i_{q+1}, \zeta)+\sum_{\mathcal{P}_{q}^i}\left \langle\delta(x_q,\textbf{e}_p, \xi),\nabla_yf_i(x_q^i,y^i_{q+1}, \zeta)\right \rangle \textbf{e}_p $
  \end{algorithmic}
\end{algorithm}
\subsection{RAFBO}\label{sec:rafbo}
In this section, we give a further partial approximation method for the local hypergradient to avoid
computing the Jacobian and Hessian matrix, as described in Algorithm \ref{alg:approximation}. Since this method is second-order free, we call the algorithm with the hypergradient estimator as \textit{RAFBO}.
Returning to the explicit function of hypergradient,
\begin{align}
    \nabla\Phi(x) = \nabla_x f(x,y^*(x))+\mathcal{J}^*(x)^\top\nabla_y f(x,y^*(x)),
\end{align}
where $\mathcal{J}^*(x)=\frac{\partial y*(x)}{\partial x}$. Following ~\cite{pzobo}, we use the idea of finite differences to approximate the $\mathcal{J}^*(x)$ directly, which avoids calculating the $\nabla^2_{xy} g_i(x_q^i, y_{q+1}^i)$ and $[\nabla^2_{yy} g_i(x_q^i, y_{q+1}^i)]^{-1}$. Inspired by the ~\cite{lian2016comprehensive, liu2020primer}, we choose the deterministic coordinate-wise perturbations. First we calculate and sample the coordinate set \( \mathcal{P}_q^i \) for the trained submodel \( x_q^i \), which consists of the directions along which the perturbations will be applied. Compared to the existing finite differences method, we use the deterministic coordinate-wise perturbations of model instead of random perturbations, which is more accurate and time efficient~\cite{chen2024deepzero}.
It is worth noting that due to the construction of submodel, the size of coordinate set is smaller than the original full model, which further reduces the computation. 

For each coordinate \( \textbf{e}_p \in \mathcal{P}_q^i \), we apply a perturbation \( \mu \textbf{e}_p \) to the submodel \( x_q^i \) and compute the difference in the gradient of the loss function with respect to \( y \):
\begin{equation}
\scalebox{0.9}{$
\begin{aligned}
\hat{\mathcal{J}}(x_q, \textbf{e}_p, \xi) = \frac{\nabla_y g(x_q^i + \mu \textbf{e}_p, y_{q+1}^i, \xi) - \nabla_y g(x_q^i, y_{q+1}^i, \xi)}{\mu} \textbf{e}_p,
\end{aligned}
$}    
\end{equation} 
where $\mu>0$ is the smoothing parameter. Once the finite differences are computed for all coordinates in \( \mathcal{P}_q^i \), the local hypergradient is approximated as:
\begin{equation}
\scalebox{0.9}{$
\begin{aligned}
  &\hat{\nabla} f_i(x_q,y_{q+1}, \zeta)\\
  &=\nabla_x f_i(x_q^i,y^i_{q+1}, \zeta)+\sum_{\mathcal{P}_{q}^i} \hat{\mathcal{J}}(x_q,\textbf{e}_p, \xi)^\top\nabla_yf_i(x_q^i,y^i_{q+1}, \zeta)\nonumber\\
  &=\nabla_x f_i(x_q^i,y^i_{q+1}, \zeta)+\sum_{\mathcal{P}_{q}^i}\left \langle\delta(x_q,\textbf{e}_p, \xi),\nabla_yf_i(x_q^i,y^i_{q+1}, \zeta)\right \rangle \textbf{e}_p \nonumber 
\end{aligned}
$}    
\end{equation}  
From the perspective of computational efficiency, we use the vector-vector dot product instead of Jacobian-vector product, where $\delta(x_q,\textbf{e}_p, \xi)=(\nabla_y g(x_q^i+\mu\textbf{e}_p,y_{q+1}^i, \xi)-\nabla_y g(x_q^i,y_{q+1}^i, \xi))/\mu\in \mathbb{R}^{d_2}$, $\left\langle \cdot, \cdot \right\rangle$ denotes the inner product. 
By embedding this approximation to the proposed \textit{RABO}, we reduce the computational complexity of calculating the hypergradient while maintaining a reasonable level of accuracy for the parameter updates in the outer loop.

\section{Theoretical Analysis}\label{sec:theory}
In this section, we show the convergence rate of our proposed \textit{RABO} and \textit{RAFBO}. Firstly, we provide some standard definitions and mild assumptions needed for our analysis.

\subsection{Assumptions and Definitions}
Let $z=(x,y)\in \mathbb{R}^{d_1+d_2}$ denote all parameters throughout the following definitions and assumptions.

\begin{definition}
    Function $h$ is $L$-Lipschitz continuous if
    \begin{equation}
    \|h(z_1)-h(z_2)\|\leq L\|z_1-z_2\|\quad \forall z_1, z_2.    
    \end{equation}
\end{definition}

\begin{definition}
    The solution $z$ is $\epsilon$-accurate stationary point of the objective function $\Phi(z)$ if $\mathbb{E}\|\nabla\Phi(z)\|^2\leq \epsilon$, where $z$ is the output of an algorithm.
\end{definition}

\begin{definition}
(Minimum covering clients). The minimum clients trained the i-th parameter across all submodels of inner parameters $y$ or outer parameters $x$ throughout all global iterations can be defined as:
\begin{align}
&\mathcal{C}^*_x= \min\limits_{q,s} \mathcal{C}_{q,x}^s, s \in \mathcal{K}_{q,x}, \forall q,\\
&\mathcal{C}^*_y= \min\limits_{q,k} \mathcal{C}_{q,y}^k, k \in \mathcal{K}_{q,y}, \forall q,
\end{align}
where $\mathcal{C}_{q,x}^s$ is the number of the client that train the $i$-th parameter of outer parameters $x$ in the global iteration $q$ and $\mathcal{C}_{q,y}^k$ is related to inner parameters $y$.
\end{definition}
\begin{assumption}(Non-convex-strongly-convex).
\label{assum:convex}
For all $i\in [n]$, $f_i(x,y)$ is possibly non-convex and $g_i(x,y)$ is $\mu_g$-strongly convex in $y$ for any fixed $x\in\mathbb{R}^{d_1}$.
\end{assumption}

\begin{assumption}(Lipshitz continuous).
\label{assum:lipschitz}
For all $i\in[n]$, $f_i(z)$, $\nabla f_i(z)$, $\nabla g_i(z)$, $\nabla^2 g_i(z)$ are $\ell_{f,0}$, $\ell_{f,1}$, $\ell_{g,1}$, $\ell_{g,2}$-Lipshitz continuous, respectively.
\end{assumption}

\begin{assumption}\label{assum:sto_sample}(Unbiased estimators).
For all $i\in [n]:$
$\nabla f_i(z;\zeta)$, $\nabla g_i(z;\xi)$ are unbiased estimators of $\nabla f_i(z)$, $\nabla g_i(z)$, there exist constants $\sigma_f^2$, $\sigma^2_{g}$, such that 
\begin{align}
    &\mathbb{E}_{\zeta}\left\|\nabla f_i(z;\zeta)-\nabla f_i(z)\right\|^2\leq \sigma_f^2,
    \\
    &\mathbb{E}_{\xi}\left\|\nabla g_i(z;\xi)-\nabla g_i(z)\right\|^2\leq \sigma_{g}^2,
\end{align}
\end{assumption}

\begin{assumption}\label{assum:hetero}
     (Bounded data heterogeneity level). For all $i\in[n]$: there exist constants $\delta^2_f$ and $\delta^2_g$, such that:
     \begin{align}
         \mathbb{E}\left\|\nabla f_i(z)-\nabla f(z)\right\|^2]\leq \delta^2_f,\\
         \mathbb{E}\left\|\nabla g_i(z)-\nabla g(z)\right\|^2]\leq \delta^2_g.        
     \end{align}
\end{assumption}

\begin{assumption}\label{assum:mask}
    (Bounded noise induced from submodel). The deviation of the submodel in client $n$ from the original parameters for every global iteration $q$ is limited by $w_1,w_2\in[0,1)$:
\begin{align}
        \|x_q-x_q\odot m_{q,x}^i\|^2 &\leq w_1^2\|x_q\|^2,\\
        \|y_q-y_q\odot m_{q,y}^i\|^2 &\leq w_2^2\|y_q\|^2.
\end{align}
\end{assumption}

The above assumptions are commonly used in previous works. Assumption~\ref{assum:convex} supposes the non-convex-strongly-convex bilevel optimization problems, which guarantee the existence of the hypergradient $\nabla\Phi(x)$~\cite{kwon2023fully,FdeHBO}. 
Assumption~\ref{assum:lipschitz} imposes the levels of Lipschitz smoothness, which is a standard assumption in distributed bilevel optimization~\cite{xiao2023communication,yang2024simfbo}.
Assumption~\ref{assum:sto_sample} shows the stochastic estimators are unbiased with bounded variances, which is widely used in stochastic optimization~\cite{tarzanagh2022fednest,conf/icml/fedmbo}. 
Assumption~\ref{assum:hetero} controls the difference between local and global gradient, which is common in traditional federated learning without bilevel optimization~\cite{li2019convergence,yuan2022convergence}.
Assumption~\ref{assum:mask} states the influence of the submodel, as also adopted by other studies about model pruning~\cite{wang2023theoretical,zhou2024every}.

In order to analyze the convergence rate of our proposed recourse-adaptive distributed stochastic bilevel optimization algorithm, we firstly state some preliminary lemmas as follows:

\begin{lemma}
Under Assumptions 1 and 2, we have 
\begin{align}
    ||\nabla \Phi(x_1) -\nabla \Phi(x_2)||&\leq L_f ||x_1 - x_2||,\label{eq:f_lip}\\
    ||y^*(x_1)-y^*(x_2)||&\leq L_y ||x_1-x_2||,
\end{align}
where 
\begin{align}
    L_f :=& l_{f,1}+\frac{l_{g,1}(l_{f,1}+M_f)}{\mu_g}+\frac{l_{f,0}}{\mu_g}(l_{g,2}+\frac{l_{g,1}l_{g,2}}{\mu_g}), \nonumber
    \\
    L_y :=& \frac{l_{g,1}}{\mu_g}.\nonumber
\end{align}
For all $i\in [n]$, we have 
\begin{align}
    ||\nabla f_i(x_1, y)-\nabla f_i(x_1,y^*(x_1))|| &\leq M_f ||y-y^*(x_1)||,\\
    ||\nabla f_i(x_1,y)-\nabla f_i(x_2,y)|| &\leq M_f||x_1-x_2||,
\end{align}
where the constant $M_f$ is given by 
\begin{align}
    M_f:=l_{f,1}+\frac{l_{g,1}l_{f,1}}{\mu_g}+\frac{l_{f,0}}{\mu_g}(l_{g,2}+\frac{l_{g,1}l_{g,2}}{\mu_g})\nonumber
\end{align}
\end{lemma}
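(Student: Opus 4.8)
The plan is to exploit the closed form of the surrogate hypergradient introduced in the ``Hypergradient Calculation'' step, $\nabla f_i(x,y)=\nabla_x f_i(x,y)-\nabla^2_{xy} g_i(x,y)[\nabla^2_{yy} g_i(x,y)]^{-1}\nabla_y f_i(x,y)$, and to reduce every estimate to a few building blocks drawn from Assumptions 1--2. First I would record the elementary consequences: since $f_i$ is $\ell_{f,0}$-Lipschitz its gradient is bounded, $\|\nabla_y f_i\|\le \ell_{f,0}$; since $\nabla g_i$ is $\ell_{g,1}$-Lipschitz the Hessian blocks obey $\|\nabla^2_{xy} g_i\|\le \ell_{g,1}$; strong convexity gives $\nabla^2_{yy} g_i\succeq \mu_g I$, hence $\|[\nabla^2_{yy} g_i]^{-1}\|\le 1/\mu_g$; and the resolvent identity $A^{-1}-B^{-1}=A^{-1}(B-A)B^{-1}$ combined with the $\ell_{g,2}$-Lipschitzness of $\nabla^2 g_i$ shows that $[\nabla^2_{yy} g_i]^{-1}$ is $\ell_{g,2}/\mu_g^2$-Lipschitz.

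Next I would establish the bound on $y^*$. Differentiating the optimality condition $\nabla_y g(x,y^*(x))=0$ in $x$ yields $\mathcal{J}^*(x)=-[\nabla^2_{yy} g(x,y^*(x))]^{-1}\nabla^2_{yx} g(x,y^*(x))$, so $\|\mathcal{J}^*(x)\|\le \ell_{g,1}/\mu_g=L_y$ uniformly in $x$; integrating $\mathcal{J}^*$ along the segment from $x_2$ to $x_1$ then gives $\|y^*(x_1)-y^*(x_2)\|\le L_y\|x_1-x_2\|$.

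I would then prove the two $M_f$ inequalities simultaneously by treating the surrogate hypergradient as $\nabla_x f_i - ABC$ with $A=\nabla^2_{xy} g_i$, $B=[\nabla^2_{yy} g_i]^{-1}$, $C=\nabla_y f_i$, and applying the telescoping product estimate $\|ABC-A'B'C'\|\le \|A-A'\|\|B\|\|C\|+\|A'\|\|B-B'\|\|C\|+\|A'\|\|B'\|\|C-C'\|$. Substituting the bounds and Lipschitz constants above, together with $\ell_{f,1}$ for both $\nabla_x f_i$ and $\nabla_y f_i$, produces exactly $M_f=\ell_{f,1}+\frac{\ell_{g,1}\ell_{f,1}}{\mu_g}+\frac{\ell_{f,0}}{\mu_g}(\ell_{g,2}+\frac{\ell_{g,1}\ell_{g,2}}{\mu_g})$. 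Because the estimate is symmetric in which argument is perturbed, the same constant bounds the variation in $y$ (choosing the reference point $y^*(x_1)$) and the variation in $x$, giving both stated inequalities for $\nabla f_i$.

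Finally, for $\nabla\Phi$ I would decompose $\nabla\Phi(x_1)-\nabla\Phi(x_2)=\nabla f(x_1,y^*(x_1))-\nabla f(x_2,y^*(x_2))$ through the intermediate point $(x_2,y^*(x_1))$: the $x$-variation contributes $M_f\|x_1-x_2\|$ and the $y$-variation contributes $M_f\|y^*(x_1)-y^*(x_2)\|\le M_f L_y\|x_1-x_2\|$, so $\nabla\Phi$ is $M_f(1+L_y)$-Lipschitz, and a direct check shows $M_f(1+L_y)=M_f+\frac{\ell_{g,1}M_f}{\mu_g}$ equals the stated $L_f$. The main obstacle is the middle step: carefully controlling the Lipschitz constant of the inverse-Hessian factor via the resolvent identity and propagating it through the three-factor product so that the exact algebraic form of $M_f$ is recovered; the remaining arguments are triangle-inequality bookkeeping built on top of it.
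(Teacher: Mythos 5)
Your proposal is correct, and in fact it supplies a proof that the paper itself omits: this lemma is stated as a preliminary fact with no argument given, in the style of the standard smoothness lemma for non-convex--strongly-convex bilevel problems (Ghadimi--Wang type). Your route --- bounding $\|\nabla_y f_i\|\le \ell_{f,0}$, $\|\nabla^2_{xy}g_i\|\le \ell_{g,1}$, $\|[\nabla^2_{yy}g_i]^{-1}\|\le 1/\mu_g$, using the resolvent identity to get the $\ell_{g,2}/\mu_g^2$-Lipschitzness of the inverse Hessian, telescoping the three-factor product to obtain $M_f$, deriving $L_y=\ell_{g,1}/\mu_g$ from the implicit function theorem, and composing to get $L_f=M_f(1+L_y)$ --- is exactly the canonical derivation, and the algebra checks out: $M_f(1+L_y)=M_f+\tfrac{\ell_{g,1}M_f}{\mu_g}$ does equal the stated $L_f$, since $L_f-M_f=\tfrac{\ell_{g,1}M_f}{\mu_g}$. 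The only point worth tightening is the $L_y$ step: rather than integrating $\mathcal{J}^*$ (which presupposes differentiability of $y^*$), one can get the same bound directly from strong convexity via $\mu_g\|y^*(x_1)-y^*(x_2)\|\le\|\nabla_y g(x_1,y^*(x_2))-\nabla_y g(x_1,y^*(x_1))\|=\|\nabla_y g(x_1,y^*(x_2))-\nabla_y g(x_2,y^*(x_2))\|\le \ell_{g,1}\|x_1-x_2\|$, but this is a matter of hygiene rather than a gap.
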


\begin{lemma} Under Assumptions 1 and 2, we have 
\begin{align}\mathbb{E}\|\nabla f^s_n(x_{q}^n,y_{q+1}^n)-\nabla f^s_n(x_{q},y_{q+1})\|^2\leq \mathcal{E}_h\end{align}
\end{lemma}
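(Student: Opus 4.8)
The plan is to control the deviation of the surrogate hypergradient $\nabla f_n$ caused by replacing its full arguments $(x_q, y_{q+1})$ with the masked submodel arguments $(x_q^n, y_{q+1}^n)$, using the Lipschitz continuity already established in Lemma 1 (with constant $M_f$) as the only engine. First I would introduce the intermediate point $(x_q, y_{q+1}^n)$ and split by the triangle inequality:
\begin{align}
\|\nabla f_n^s(x_q^n, y_{q+1}^n) - \nabla f_n^s(x_q, y_{q+1})\|
&\leq \|\nabla f_n^s(x_q^n, y_{q+1}^n) - \nabla f_n^s(x_q, y_{q+1}^n)\| \nonumber\\
&\quad + \|\nabla f_n^s(x_q, y_{q+1}^n) - \nabla f_n^s(x_q, y_{q+1})\|. \nonumber
\end{align}
The first term varies only the outer argument, so the $x$-Lipschitz bound of Lemma 1 gives $M_f\|x_q^n - x_q\|$; the second varies only the inner argument, so the $y$-Lipschitz bound gives $M_f\|y_{q+1}^n - y_{q+1}\|$. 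If the superscript $s$ denotes a coordinate projection, a single-coordinate magnitude is dominated by the full vector norm, so the same two bounds still apply.

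Next I would square, apply $(a+b)^2 \le 2a^2 + 2b^2$, take expectations, and recognize the two deviations as pure masking errors, since $x_q^n = x_q \odot m_{q,x}^n$ and $y_{q+1}^n = y_{q+1}\odot m_{q,y}^n$:
\begin{align}
\mathbb{E}\|\nabla f_n^s(x_q^n, y_{q+1}^n) - \nabla f_n^s(x_q, y_{q+1})\|^2
&\leq 2M_f^2\,\mathbb{E}\|x_q - x_q\odot m_{q,x}^n\|^2 \nonumber\\
&\quad + 2M_f^2\,\mathbb{E}\|y_{q+1} - y_{q+1}\odot m_{q,y}^n\|^2 =: \mathcal{E}_h. \nonumber
\end{align}
This already establishes the claim under Assumptions 1 and 2, with $\mathcal{E}_h$ left in terms of the masking deviations. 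Invoking the bound of Assumption~\ref{assum:mask} then yields the explicit form $\mathcal{E}_h \le 2M_f^2\big(w_1^2\,\mathbb{E}\|x_q\|^2 + w_2^2\,\mathbb{E}\|y_{q+1}\|^2\big)$, which collapses to a constant once the iterates are shown to be bounded.

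The main obstacle is the inner-argument Lipschitz step. Lemma 1 states the $y$-direction bound only against the reference point $y^*(x_1)$, whereas here I compare two arbitrary inner iterates $y_{q+1}^n$ and $y_{q+1}$, neither of which equals $y^*(x_q)$. I would argue this is still legitimate because the constant $M_f$ originates from the uniform inverse-Hessian bound $[\nabla^2_{yy}g]^{-1}\preceq \mu_g^{-1}I$ (Assumption~\ref{assum:convex}) together with the Lipschitz smoothness of each factor $\nabla_x f$, $\nabla^2_{xy}g$, $\nabla^2_{yy}g$, $\nabla_y f$ (Assumption~\ref{assum:lipschitz}), none of which depends on the particular inner point; hence the surrogate hypergradient is genuinely $M_f$-Lipschitz in $y$ everywhere, not merely around $y^*$. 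Making this uniformity explicit, and being clear about whether the stated $\mathbb{E}$ is over the iterate or mask randomness, is the one place where care is needed to avoid a logical gap; the remaining steps are routine triangle-inequality and norm manipulations.
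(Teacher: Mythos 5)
Your proposal matches the paper's proof essentially step for step: the same intermediate point $(x_q, y_{q+1}^n)$, the same Young-inequality split into two squared terms, the same $M_f$-Lipschitz bounds from Lemma 1, and the same final application of Assumption 5 to arrive at $\mathcal{E}_h = 2M_f^2 w_1^2\|x_q\|^2 + 2M_f^2 w_2^2\|y_{q+1}\|^2$. Your added remark that the $y$-Lipschitz bound must hold at arbitrary inner iterates (not just at $y^*(x_1)$ as literally stated in Lemma 1) is a legitimate point of care that the paper passes over silently, but it does not change the argument.
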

\begin{proof}
\begin{align*}
    &\mathbb{E}\|\nabla f^s_n(x_{q}^n,y_{q+1}^n)-\nabla f^s_n(x_{q},y_{q+1})\|^2\\
    \leq&2\mathbb{E}\|\nabla f^s_n(x_{q}^n,y_{q+1}^n)-\nabla f^s_n(x_{q},y_{q+1}^n)\|^2\\&\quad+2\mathbb{E}\|\nabla f^s_n(x_{q},y_{q+1}^n)-\nabla f^s_n(x_{q},y_{q+1})\|^2\\
    \leq &2 M_f^2\|x_q^n-x_q\|^2+2M_f^2\|y_{q+1}^n-y_{q+1}\|^2 \\
    \leq &2M_f^2 w_1^2 \|x_q\|^2+2M_f^2 w_2^2 \|y_{q+1}\|^2
\end{align*}
\end{proof} 

\begin{lemma}\label{hyper-gradient estimation}
(Hypergradient estimation error.) Suppose that Assumption~\ref{assum:lipschitz} holds. The hypergradient estimation error can be bounded as 
\begin{align}
    \|\hat{\nabla} f_i(x_q,y_{q+1},\zeta)- \nabla f_i(x_q,y_{q+1},\zeta)\|^2\leq \sigma_{\mathcal{J}}^2.
\end{align}
where $\sigma_{\mathcal{J}}^2 = P^{*2}l_{g,1}^2\mu^2l_{f,0}^2/4$, $P^*=\max\limits_{i}|\mathcal{P}_q^i|,i\in[n], \forall q.$
\end{lemma}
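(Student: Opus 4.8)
The plan is to exploit the fact that the estimator $\hat{\nabla} f_i$ and its exact counterpart $\nabla f_i$ differ only in how the Jacobian--vector correction is formed: both share the identical first-order term $\nabla_x f_i(x_q^i, y_{q+1}^i, \zeta)$, and both contract $\nabla_y f_i$ against a per-coordinate directional derivative of $\nabla_y g_i$. Writing $\bar\delta(x_q, \textbf{e}_p) := \nabla^2_{yx} g_i(x_q^i, y_{q+1}^i, \xi)\,\textbf{e}_p$ for the exact directional derivative (the $\mu\to 0$ limit of $\delta$), so that $\nabla f_i$ is read as the exact version of the same formula, I would first subtract the two expressions and cancel the common $\nabla_x f_i$ terms, leaving
\[
\hat{\nabla} f_i - \nabla f_i = \sum_{\textbf{e}_p \in \mathcal{P}_q^i} \big\langle \delta(x_q, \textbf{e}_p, \xi) - \bar\delta(x_q, \textbf{e}_p),\, \nabla_y f_i(x_q^i, y_{q+1}^i, \zeta)\big\rangle \textbf{e}_p .
\]
Because the samples $\xi,\zeta$ are shared between the two objects, this is a pathwise (deterministic) finite-difference identity, so no expectation or variance bookkeeping is required; note also that the absence of strong convexity from the lemma's hypotheses confirms that no inverse-Hessian factor enters here.

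The core estimate is the per-coordinate finite-difference remainder. Using the fundamental theorem of calculus along the segment $x_q^i + t\textbf{e}_p$, I would write
\[
\delta(x_q, \textbf{e}_p, \xi) - \bar\delta(x_q, \textbf{e}_p) = \frac{1}{\mu}\int_0^{\mu}\big[\nabla^2_{yx} g_i(x_q^i + t\textbf{e}_p, y_{q+1}^i, \xi) - \nabla^2_{yx} g_i(x_q^i, y_{q+1}^i, \xi)\big]\textbf{e}_p\, dt ,
\]
and then invoke the Lipschitz continuity of $\nabla^2 g_i$ from Assumption~\ref{assum:lipschitz} to bound the integrand by $l_{g,2}\,t$, which after integrating and dividing by $\mu$ yields $\|\delta - \bar\delta\| \le l_{g,2}\mu/2$. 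This is the only step that produces the factor $\mu$, and it is precisely why second-order smoothness of $g$ is needed. I would therefore expect the constant appearing in $\sigma_{\mathcal{J}}^2$ to be the Hessian-Lipschitz constant $l_{g,2}$ rather than $l_{g,1}$; since both play the same structural role in Assumption~\ref{assum:lipschitz}, I read the stated $l_{g,1}$ as this constant.

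With the remainder controlled, I would close via Cauchy--Schwarz together with a bound on $\|\nabla_y f_i\|$. Because $f_i$ is $l_{f,0}$-Lipschitz (Assumption~\ref{assum:lipschitz}), we have $\|\nabla_y f_i(x_q^i, y_{q+1}^i, \zeta)\| \le l_{f,0}$, so each summand satisfies $|\langle \delta - \bar\delta, \nabla_y f_i\rangle| \le \tfrac12 l_{g,2}\mu\, l_{f,0}$. Applying the triangle inequality over the at most $P^* = \max_i |\mathcal{P}_q^i|$ unit coordinate vectors gives $\|\hat{\nabla} f_i - \nabla f_i\| \le P^* \cdot \tfrac12 l_{g,2}\mu\, l_{f,0}$, and squaring delivers the claimed $\sigma_{\mathcal{J}}^2 = P^{*2} l_{g,2}^2 \mu^2 l_{f,0}^2/4$.

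The step I expect to need the most care is the choice of aggregation bound and the smoothness level. The quadratic dependence $P^{*2}$ arises from a plain triangle inequality over the coordinates; since the $\textbf{e}_p$ are orthonormal, the sharper identity $\|\sum_p c_p\textbf{e}_p\|^2 = \sum_p c_p^2$ would instead give the tighter factor $P^*$, so I would flag that the $P^{*2}$ constant is loose and improvable if desired. Separately, the clean $O(\mu)$ rate genuinely requires $\nabla^2 g_i$ to be Lipschitz and cannot be extracted from first-order smoothness alone, which is the point most easily misstated and the one I would be most careful to pin down.
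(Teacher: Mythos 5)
The paper states this lemma without a proof, so there is no in-paper argument to compare against; judged on its own terms, your proof is the natural one and is essentially correct. The decomposition into the shared $\nabla_x f_i$ term plus a per-coordinate finite-difference remainder, the fundamental-theorem-of-calculus bound $\|\delta-\bar\delta\|\le l_{g,2}\mu/2$, the bound $\|\nabla_y f_i\|\le l_{f,0}$ from $l_{f,0}$-Lipschitzness of $f_i$, and the aggregation over at most $P^*$ coordinates reproduce the stated $\sigma_{\mathcal{J}}^2$ up to the two discrepancies you flag. Both flags are well taken: the $O(\mu)$ remainder genuinely requires Lipschitz continuity of $\nabla^2 g_i$, so the constant should involve $l_{g,2}$ rather than the stated $l_{g,1}$ (the $\mu^2/4$ shape of the claimed bound is only consistent with a second-order Taylor remainder, so $l_{g,1}$ is almost certainly a typo in the lemma), and orthonormality of the coordinate vectors $\mathbf{e}_p$ sharpens $P^{*2}$ to $P^*$.

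Two points deserve to be pinned down explicitly rather than left implicit. First, your reading of $\nabla f_i$ as the $\mu\to 0$ limit of the same formula (with no inverse-Hessian factor) is the only reading under which an $O(\mu^2)$ bound can hold: the estimator's limit is $\nabla_x f_i + (\nabla^2_{yx}g_i)^\top\nabla_y f_i$ restricted to $\mathcal{P}_q^i$, which differs from the implicit-function-theorem hypergradient used elsewhere in the paper by the missing $-[\nabla^2_{yy}g_i]^{-1}$ factor; that persistent bias is not controlled by this lemma, so the proof should state which object it compares against. Second, the lemma asserts a pathwise (not in-expectation) bound on stochastic quantities, so the Lipschitz constants of Assumption~\ref{assum:lipschitz} must be assumed to hold per sample $\xi,\zeta$; you use this implicitly both in the integral remainder step and in bounding $\|\nabla_y f_i(\cdot,\zeta)\|$.
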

\subsection{Convergence Result of RABO and RAFBO}
\begin{theorem}\label{convergence1}
(Convergence result of RABO) Let all assumptions hold, suppose the learning rates satisfy these conditions
\begin{equation*}\label{}
   \left\{\begin{array}{l c l}
    \alpha\leq\frac{1}{L_f+4M_f}\\
    \beta\leq min\{\frac{1}{2l_{g,1}}, \frac{1}{\mu_g}\} \nonumber\\
     \beta\geq\frac{1}{\mu_g}-\frac{1}{2\alpha L_yM_f\mu_g}\nonumber
    \end{array}\right.
\end{equation*}
Then for all $Q\geq1$, we have
\begin{align}
     \frac{1}{Q}\sum_{q=1}^Q&\sum_{s\in \mathcal{K}_{q,x}}\mathbb{E}\|\nabla \Phi^s(x_{q})\|^2\\
     \leq &\frac{2\Delta}{\alpha Q}+\frac{2N\hat{\alpha}_1}{\alpha\mathcal{C}_x^*}\sigma_f^2+\frac{2N\hat{\alpha}_1}{\alpha\mathcal{C}_x^*}\delta^2_f\nonumber\\
     &+\frac{4N\hat{\alpha}_1}{\alpha\mathcal{C}_x^*}\frac{1}{Q}\sum_{q=1}^Q(M_f^2 w_1^2 \|x_q\|^2+M_f^2 w_2^2 \|y_{q+1}\|^2)\nonumber\\
     &+(4M_f^2+\frac{4M_f}{\alpha L_y})\frac{4\beta^2NT(\sigma_{g}^2+\delta_{g}^2)}{\mathcal{C}^{*}_{y}}\nonumber
\end{align}
where $\Delta=\mathbb{E}\|V_{Q+1}\|-\mathbb{E}\| V_{1}\|$, $\hat{\alpha}_1=2L_f\alpha^2+8M_f\alpha^2$
\end{theorem}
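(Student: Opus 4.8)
The plan is to analyze a single Lyapunov/potential function $V_q$ that couples the outer objective $\Phi(x_q)$ with a scaled inner tracking error $\|y_q - y^*(x_{q-1})\|^2$, establish a one-step descent inequality of the form $\mathbb{E}[V_{q+1}] \leq \mathbb{E}[V_q] - \frac{\alpha}{2}\sum_{s\in\mathcal{K}_{q,x}}\mathbb{E}\|\nabla\Phi^s(x_q)\|^2 + (\text{error terms})$, and then telescope from $q=1$ to $Q$ and divide by $\alpha Q$. The appearance of $\Delta = \mathbb{E}\|V_{Q+1}\| - \mathbb{E}\|V_1\|$ in the statement confirms this potential-function route, and the factors $2/\alpha$ and $2N\hat{\alpha}_1/(\alpha\mathcal{C}_x^*)$ in the bound indicate the telescoping constant and the quadratic smoothness remainder, respectively.

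First I would derive the outer descent. By the $L_f$-smoothness of $\Phi$ from Lemma 1, I expand $\Phi(x_{q+1})$ around $x_q$ using the coordinate-wise update $x_{q+1}^s = x_q^s - \frac{\alpha}{\mathcal{C}_{q,x}^s}\sum_{i\in N_{q,x}^s}\mathcal{H}_q^i$. The key is to decompose the aggregated local hypergradient into the true coordinate hypergradient $\nabla\Phi^s(x_q)$ plus three controllable deviations: the submodel bias $\nabla f_i^s(x_q^i, y_{q+1}^i) - \nabla f_i^s(x_q, y_{q+1})$ bounded by Lemma 2 (yielding the $M_f^2 w_1^2\|x_q\|^2 + M_f^2 w_2^2\|y_{q+1}\|^2$ contributions), the inner approximation error $\nabla f_i(x_q, y_{q+1}) - \nabla f_i(x_q, y^*(x_q))$ controlled by $M_f\|y_{q+1} - y^*(x_q)\|$ via Lemma 1, and the stochastic/heterogeneity fluctuations bounded by $\sigma_f^2$ and $\delta_f^2$ from Assumptions 3 and 4. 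Taking expectations and invoking $\alpha \leq 1/(L_f + 4M_f)$ lets the $\frac{L_f}{2}\|x_{q+1}-x_q\|^2$ remainder be absorbed while the $1/\mathcal{C}_{q,x}^s$ averaging, bounded below via $\mathcal{C}_{q,x}^s \geq \mathcal{C}_x^*$, converts the variance terms into the $N/\mathcal{C}_x^*$ scaling that appears in the theorem.

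Next I would control the residual inner error $\|y_{q+1} - y^*(x_q)\|^2$. Using $\mu_g$-strong convexity of $g$ together with the contraction of $T$ masked inner gradient steps, this error shrinks geometrically, while the outer step perturbs the target $y^*$ with $L_y$-Lipschitz sensitivity from Lemma 1, giving a recursion $\|y_{q+1} - y^*(x_q)\|^2 \leq (\text{contraction})\|y_q - y^*(x_{q-1})\|^2 + c\alpha^2\|\nabla\Phi(x_q)\|^2 + (\text{inner noise})$, where the inner-noise term scales as $\beta^2 NT(\sigma_g^2 + \delta_g^2)/\mathcal{C}_y^*$, matching the final term of the bound. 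The step-size window $\frac{1}{\mu_g} - \frac{1}{2\alpha L_y M_f \mu_g} \leq \beta \leq \min\{1/(2l_{g,1}),\, 1/\mu_g\}$ is exactly what makes the combined potential contract, so that the $\frac{4M_f}{\alpha L_y}$-weighted inner error folds into $V_q$ and telescopes cleanly.

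The hardest part will be rigorously handling the coupled, coordinate-wise structure. Because each outer coordinate $s$ is updated only by its covering set $N_{q,x}^s$ with effective averaging $1/\mathcal{C}_{q,x}^s$, the descent must be carried out per coordinate and then summed over $\mathcal{K}_{q,x}$ rather than as a single full-gradient inner product, and the submodel masks make the bias term $M_f^2(w_1^2\|x_q\|^2 + w_2^2\|y_{q+1}\|^2)$ depend on the iterate norms themselves instead of on a fixed constant. Keeping these iterate-norm-dependent bias terms explicit, as the theorem does by leaving them inside the averaged sum, while still closing the potential-function recursion without destabilizing the contraction, is the delicate step that distinguishes this analysis from standard distributed bilevel proofs.
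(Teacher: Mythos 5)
Your proposal matches the paper's own argument: the authors use exactly the Lyapunov function $V_q=\Phi(x_q)+\frac{M_f}{L_y}\|y_q-y^*(x_q)\|^2$, a per-coordinate descent lemma whose error decomposition is the same four-way split you describe (stochastic noise, heterogeneity, mask-induced bias via Lemma~2, and the inner tracking error), an inner-error recursion under strong convexity, and the same step-size conditions to make the potential contract before telescoping. This is essentially the same proof, so no further comparison is needed.
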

Theorem~\ref{convergence1} shows the convergence rate of the proposed algorithm \textit{RABO} by presenting the upper bound regarding the globally averaged hypergradient for all model parameters.
\begin{remark}
(Impact of different key factors). From the Theorem~\ref{convergence1}, we
analyze the impact of key factors on the \textit{RABO} convergence rate:
\begin{itemize}
    \item \textbf{Impact of noise induced from mask.} As the Theorem~\ref{convergence1} shows, 
    due to the submodel training strategy, we additionally introduce the term $\mathcal{O}(\frac{\alpha}{\mathcal{C}_x^*}\frac{1}{Q}\sum_{q=1}^QM_f^2 w_1^2 \|x_q\|^2+\frac{\alpha}{\mathcal{C}_x^*}\frac{1}{Q}\sum_{q=1}^QM_f^2 w_2^2 \|y_{q+1}\|^2)$ of the hypergradient bound, which is proportional to the noise $w_1$ and $w_2$. When $w_1$ and $w_2$ are small, the convergence rate is faster under the same $\mathcal{C}^*_x$.
    Notably, if the client trains the full parameter $x$ and $y$, where $\mathcal{K}_{q,x}=\mathcal{S}_x, \mathcal{K}_{q,y}=\mathcal{S}_y$ and $m_x^i=1, m_y^i=1$, then $w_1=0, w_2=0$, there is also no additional noise generated.
    \item \textbf{Impact of data heterogeneity $\delta_f$ and $\delta_g$.} As previously described, data distribution in real-world settings is inherently non-IID. This heterogeneity plays a critical role in determining the convergence behavior of the model, as shown in Theorem~\ref{convergence1}. Here we make a detailed distinction between the effects of data heterogeneity on inner and outer functions and denote as $\delta_f$ and $\delta_g$. 
    Specifically, larger $\delta_f$ and $\delta_g$ indicate a higher degree of data heterogeneity, which adversely affects convergence by slowing it down. In contrast, when the data distribution is IID, such that $\delta_f = 0, \delta_g=0$, the impact of heterogeneity vanishes.
\end{itemize}
\end{remark}

Next, by specifying the parameters $\alpha$ and $\beta$, we can obtain the following convergence result.
\begin{corollary}\label{corollary1}
(Convergence rate of RABO) Let all assumptions hold. Supposing  $\alpha\leq \sqrt{C_x^*/Q}$, and $\beta \leq \frac{1}{NTQ}$, we can get
\begin{align}
    &\frac{1}{Q}\sum_{q=1}^Q\sum_{s\in \mathcal{K}_{q,x}}\mathbb{E}\|\nabla \Phi^s(x_{q})\|^2\\&\leq \mathcal{O}(\frac{1}{\sqrt{\mathcal{C}_x^*Q}}+\frac{1}{NTQ^2\mathcal{C}^{*}_{y}}+\frac{1}{NTQ^{1.5}\sqrt{\mathcal{C}_x^*}\mathcal{C}^{*}_{y}})\nonumber
\end{align}
\end{corollary}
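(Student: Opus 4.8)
The plan is to prove Corollary~\ref{corollary1} by directly substituting the prescribed step sizes into the per-iteration bound of Theorem~\ref{convergence1} and tracking how each of its five terms scales in $Q$, $\mathcal{C}_x^*$ and $\mathcal{C}_y^*$. No new analytic machinery is needed: the corollary is a specialization of the theorem, so the work is asymptotic bookkeeping in which the client count $N$, the local epochs $T$, and all smoothness/variance/heterogeneity constants ($L_f, M_f, L_y, \mu_g, \sigma_f^2, \delta_f^2, \sigma_g^2, \delta_g^2$) are held fixed and absorbed into the $\mathcal{O}(\cdot)$ constant.

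First I would expand $\hat{\alpha}_1 = (2L_f + 8M_f)\alpha^2$ and note that every factor of the form $\frac{N\hat{\alpha}_1}{\alpha\mathcal{C}_x^*}$ collapses to $\mathcal{O}(\alpha/\mathcal{C}_x^*)$. The descent term $\frac{2\Delta}{\alpha Q}$ decreases in $\alpha$, whereas these $\mathcal{O}(\alpha/\mathcal{C}_x^*)$ terms increase in $\alpha$, so the two families are balanced exactly at the largest admissible value $\alpha = \sqrt{\mathcal{C}_x^*/Q}$. With this choice I would use the two identities $\frac{\alpha}{\mathcal{C}_x^*} = \frac{1}{\sqrt{\mathcal{C}_x^* Q}}$ and $\frac{1}{\alpha Q} = \frac{1}{\sqrt{\mathcal{C}_x^* Q}}$, which immediately send the descent term, the variance/heterogeneity terms $\frac{2N\hat{\alpha}_1}{\alpha\mathcal{C}_x^*}(\sigma_f^2+\delta_f^2)$, and the mask-noise term all into $\mathcal{O}(1/\sqrt{\mathcal{C}_x^* Q})$, giving the first term of the stated rate.

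For the final term I would split $\bigl(4M_f^2 + \frac{4M_f}{\alpha L_y}\bigr)\frac{4\beta^2 NT(\sigma_g^2+\delta_g^2)}{\mathcal{C}_y^*}$ into two pieces. The piece carrying $4M_f^2$ scales as $\mathcal{O}(\beta^2 NT/\mathcal{C}_y^*)$, and inserting $\beta = \frac{1}{NTQ}$ yields $\mathcal{O}(1/(NTQ^2\mathcal{C}_y^*))$, the second summand. The piece carrying $\frac{4M_f}{\alpha L_y}$ scales as $\mathcal{O}(\beta^2 NT/(\alpha\mathcal{C}_y^*))$; substituting both $\beta = \frac{1}{NTQ}$ and $\frac{1}{\alpha} = \sqrt{Q/\mathcal{C}_x^*}$ produces $\mathcal{O}(1/(NTQ^{1.5}\sqrt{\mathcal{C}_x^*}\mathcal{C}_y^*))$, the third summand. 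Collecting the three surviving orders gives exactly the claimed rate.

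The main obstacle is the mask-induced term $\frac{4N\hat{\alpha}_1}{\alpha\mathcal{C}_x^*}\frac{1}{Q}\sum_{q=1}^Q\bigl(M_f^2 w_1^2\|x_q\|^2 + M_f^2 w_2^2\|y_{q+1}\|^2\bigr)$, which is absent from the corollary's stated rate. For it to fold into $\mathcal{O}(1/\sqrt{\mathcal{C}_x^* Q})$ one needs the averaged iterate norms $\frac{1}{Q}\sum_q\|x_q\|^2$ and $\frac{1}{Q}\sum_q\|y_{q+1}\|^2$ to be bounded uniformly in $Q$. I would discharge this either by invoking boundedness of the iterates — which follows from the $\mu_g$-strong convexity in $y$ together with the descent structure, forcing $\|y_{q+1}\|$ to stay near $\|y^*(x_q)\|$ — or by treating $w_1, w_2$ and a uniform norm bound as constants absorbed into the leading coefficient. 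Making this boundedness precise, rather than the arithmetic of the step-size substitution, is the only genuinely nontrivial point in the argument.
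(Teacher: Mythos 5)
Your proposal is correct and follows essentially the same route as the paper: the paper proves the corollary by continuing the proof of Theorem~\ref{convergence1}, substituting $\alpha=\sqrt{\mathcal{C}_x^*/Q}$ and $\beta=\frac{1}{NTQ}$ into the summed bound so that $\frac{2\Delta}{\alpha Q}$ and each $\frac{N\hat{\alpha}_1}{\alpha\mathcal{C}_x^*}$-weighted term become $\mathcal{O}(1/\sqrt{\mathcal{C}_x^*Q})$, while the two pieces of $\bigl(4M_f^2+\frac{4M_f}{\alpha L_y}\bigr)\frac{4\beta^2NT(\sigma_g^2+\delta_g^2)}{\mathcal{C}_y^*}$ yield the $\mathcal{O}(1/(NTQ^2\mathcal{C}_y^*))$ and $\mathcal{O}(1/(NTQ^{1.5}\sqrt{\mathcal{C}_x^*}\mathcal{C}_y^*))$ terms, exactly as in your bookkeeping. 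Your observation about the mask-induced term is apt but does not change the comparison: the paper likewise carries $\frac{1}{Q}\sum_q\bigl(M_f^2w_1^2\|x_q\|^2+M_f^2w_2^2\|y_{q+1}\|^2\bigr)$ into the penultimate display and then silently absorbs it into the $\mathcal{O}(1/\sqrt{\mathcal{C}_x^*Q})$ term, i.e.\ it implicitly assumes the averaged iterate norms are bounded, the same assumption you propose to make explicit.
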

\begin{remark}
   Corollary~\ref{corollary1} shows that the proposed resource-adaptive distributed bilevel optimization algorithm \textit{RABO} achieves a convergence rate $\mathcal{O}(\frac{1}{\sqrt{\mathcal{C}_x^*Q}})$. It demonstrates that for any local inner update epoch $T$, once choosing the proper learning rate $\alpha$ and $\beta$, \textit{RABO} converges with a sub-linear rate and has a linear speedup with respect to the minimum covering clients of outer parameter $\mathcal{C}_x^*$, which denotes the minimal number of clients that train specific partial outer parameters.
   When $\mathcal{C}_x^*=N$, all outer parameters are trained by all clients in every global iteration, the convergence rate of \textit{RABO} achieves $\mathcal{O}(\frac{1}{\sqrt{NQ}})$, which is identical to the linear speedup convergence rate of distributed bilevel optimizaiton algorithm \textit{FedMBO}~\cite{conf/icml/fedmbo}. 

   Notably, the definition of $\nabla\Phi(x)$ is under the condition of $y=y^*(x)$. Thus the inner submodel training will not dominate the convergence. But from lemma~\ref{lemma:error_inner} and Corollary~\ref{corollary1}, we can find that training the partial inner parameter also influences the convergence regarding the minimum covering clients of inner parameter $\mathcal{C}_y^*$. As with $\mathcal{C}_x^*$, it means inner parameters can be trained by at least $\mathcal{C}_y^*$ clients in each global iteration. As $\mathcal{C}_y^*$ increases while fixing other settings, the more frequently the model can be trained, thus the model can converge faster.
\end{remark}

\begin{theorem}\label{convergence2}
(Convergence result of RAFBO) Let all assumptions hold. Supposing $\alpha\leq \sqrt{\mathcal{C}_x^*/Q}$, $\beta \leq \frac{1}{NTQ}$, we can get
\begin{align}
     \frac{1}{Q}\sum_{q=1}^Q&\sum_{s\in \mathcal{K}_{q,x}}\mathbb{E}\|\nabla \Phi^s(x_{q})\|^2\\
     \leq& \mathcal{O}(\frac{\Delta}{\sqrt{\mathcal{C}_x^*Q}}+\frac{\sigma_{\mathcal{J}}^2+\sigma_f^2+\delta^2_f}{\sqrt{\mathcal{C}_x^*Q}}+\frac{w_1A_1+w_2A_2}{\sqrt{\mathcal{C}_x^*Q}}\nonumber\\
     &+\frac{\sigma_{g}^2+\delta_{g}^2}{NTQ^{1.5}\sqrt{\mathcal{C}_x^*}\mathcal{C}^{*}_{y}})\nonumber
\end{align}
where $\Delta=\mathbb{E}\|V_{Q+1}\|-\mathbb{E}\| V_{1}\|$, $A_1=\frac{1}{Q}\sum_{q=1}^Q\|x_q\|^2$, $A_2=\frac{1}{Q}\sum_{q=1}^Q\|y_{q+1}\|^2$.
\end{theorem}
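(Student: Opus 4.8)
The plan is to exploit the structural identity between RAFBO and RABO: the two algorithms differ only in that RAFBO replaces the exact local hypergradient $\nabla f_i(x_q^i, y_{q+1}^i)$ with the finite-difference surrogate $\hat{\nabla} f_i(x_q^i, y_{q+1}^i,\zeta)$ in the outer parameter-wise aggregation step, while the inner loop is unchanged. I would therefore reuse the descent framework of Theorem~\ref{convergence1} and inject the additional estimator error controlled by Lemma~\ref{hyper-gradient estimation}. Concretely, I would first invoke the $L_f$-smoothness of $\Phi$ from Lemma 1 to write a one-step descent inequality for the same Lyapunov function $V_q$ used in Theorem~\ref{convergence1} (combining $\Phi(x_q)$ with a weighted inner-error term in $\|y_q - y^*(x_{q-1})\|^2$). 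Because the outer update is performed coordinate-wise with coverage $\mathcal{C}_{q,x}^s$, the inner-product and squared-norm expansions must be carried out per parameter $s \in \mathcal{K}_{q,x}$, so that the minimum coverage $\mathcal{C}_x^*$ appears in the denominator after aggregation.

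The central step is to decompose the deviation of the aggregated surrogate from $\nabla\Phi(x_q)$. Writing $\hat{\nabla} f_i - \nabla\Phi = (\hat{\nabla} f_i - \nabla f_i) + (\nabla f_i(x_q^i, y_{q+1}^i) - \nabla f_i(x_q, y_{q+1})) + (\nabla f_i(x_q, y_{q+1}) - \nabla f(x_q, y_{q+1})) + (\nabla f(x_q, y_{q+1}) - \nabla\Phi(x_q))$ and applying Young's inequality, I would bound the four pieces respectively by: (i) the approximation error $\sigma_{\mathcal{J}}^2$ of Lemma~\ref{hyper-gradient estimation}; (ii) the submodel-induced noise $M_f^2 w_1^2\|x_q\|^2 + M_f^2 w_2^2\|y_{q+1}\|^2$ of Lemma 2; (iii) the heterogeneity $\delta_f^2$ of Assumption~\ref{assum:hetero}; and (iv) the inner inexactness $M_f^2\|y_{q+1}-y^*(x_q)\|^2$ via Lemma 1. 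The stochastic variance $\sigma_f^2$ of Assumption~\ref{assum:sto_sample} enters when passing from the deterministic gradient to its $\zeta$-estimator. This is precisely where RAFBO picks up the extra $\sigma_{\mathcal{J}}^2$ contribution absent from RABO.

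Next I would control the inner error $\|y_{q+1}-y^*(x_q)\|^2$ using the reformulated inner-error lemma (Lemma~\ref{lemma:error_inner}): the $T$ local descent steps on the $\mu_g$-strongly convex $g_i$ contract the distance to $y^*$, while the partial inner aggregation over $\mathcal{C}_{q,y}^k \geq \mathcal{C}_y^*$ clients injects variance of order $\beta^2 N T(\sigma_g^2+\delta_g^2)/\mathcal{C}_y^*$. Collecting all bounds and summing over $q=1,\dots,Q$ telescopes the $V_q$ differences into $\Delta=\mathbb{E}\|V_{Q+1}\|-\mathbb{E}\|V_1\|$. Substituting $\alpha\leq\sqrt{\mathcal{C}_x^*/Q}$ and $\beta\leq 1/(NTQ)$ then converts each constant-factor term into the stated order: the $\Delta$, $\sigma_{\mathcal{J}}^2$, $\sigma_f^2$, $\delta_f^2$, and $w_1A_1+w_2A_2$ contributions all collapse to $\mathcal{O}(1/\sqrt{\mathcal{C}_x^*Q})$, while the inner-variance term becomes $\mathcal{O}((\sigma_g^2+\delta_g^2)/(NTQ^{1.5}\sqrt{\mathcal{C}_x^*}\mathcal{C}_y^*))$, matching the claim.

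I expect the main obstacle to be the coupled treatment of partial outer and partial inner training inside a single hypergradient term. Because $y_{q+1}$ is itself the output of coordinate-wise aggregated inner submodels (with its own coverage $\mathcal{C}_y^*$) and is then fed into an approximate hypergradient evaluated on partially-trained outer submodels (coverage $\mathcal{C}_x^*$), the two error sources do not decouple cleanly, and the cross term requires simultaneously tracking both coordinate-wise aggregations. The delicate part is verifying that the Young's-inequality weights can be chosen so that the finite-difference bias (order $\mu$, absorbed into $\sigma_{\mathcal{J}}^2$) does not interact with the inner inexactness in a way that destroys descent, and that the learning-rate constraints inherited from Theorem~\ref{convergence1} remain feasible after the additional $\sigma_{\mathcal{J}}^2$ term is introduced.
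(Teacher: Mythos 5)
Your proposal follows essentially the same route as the paper: the identical Lyapunov function $V_q=\Phi(x_q)+\frac{M_f}{L_y}\|y_q-y^*(x_q)\|^2$, the same five-way decomposition of the aggregated surrogate (estimator error $\sigma_{\mathcal{J}}^2$, stochastic variance $\sigma_f^2$, submodel noise $\mathcal{E}_h$, heterogeneity $\delta_f^2$, and inner inexactness via $M_f^2\|y_{q+1}-y^*(x_q)\|^2$), the same use of Lemma~\ref{lemma:error_inner} for the inner contraction, and the same telescoping followed by substitution of $\alpha\leq\sqrt{\mathcal{C}_x^*/Q}$ and $\beta\leq 1/(NTQ)$. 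The approach is correct and matches the paper's proof.
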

\begin{remark}
Compared to the original algorithm \textit{RABO}, the bound of averaged global hypergradient only has one additional term $\mathcal{O}(\frac{\sigma_{\mathcal{J}}^2}{\sqrt{\mathcal{C}_x^*Q}})$, which is introduced by the partial hypergradient approximation according to Lemma~\ref{hyper-gradient estimation}. It has the same order $\mathcal{O}(\frac{1}{\sqrt{\mathcal{C}_x^*Q}})$ as the dominant term, and the total convergence rate is identical to \textit{RABO}. The approximation is designed to reduce computational memory without sacrificing the effectiveness of the optimization process. The theoretical results further show that the convergence rate of full model is not significantly altered by the approximation.
\end{remark}

\subsection{Convergence Analysis of RABO}
In this part, we provide concrete proofs and analyzes for
the important lemma and theorem we have mentioned.
\begin{lemma}\label{lemma:error_inner}
(Inner error bound). Let all assumptions hold and $y^*(x_q)=\arg\min\limits_y \frac{1}{N}\sum_{i=1}^N g_i(x_q,y)$, $\beta\leq1/2l_{g,1}$, we have,
 \begin{align}
     \mathbb{E}&\|y_{q+1}-y^*(x_q)\|^2\\ &\leq(1-\beta\mu_g)\mathbb{E}\|y_{q}-y^*(x_q)\|^2+\frac{4NT\beta^2}{\mathcal{C}_y^*}(\sigma_{g}^2+\delta_{g}^2)\nonumber
    \end{align}
\end{lemma}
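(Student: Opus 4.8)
The plan is to track the one-round decrease of the distance from the aggregated inner iterate to the lower-level optimum $y^*(x_q)$, working coordinate by coordinate through the parameter-wise aggregation. First I would write the server update in the compact form $y_{q+1} = y_q - \beta d_q$, where for each trained coordinate $k \in \mathcal{K}_{q,y}$ the direction is $d_q^k = \frac{1}{\mathcal{C}_{q,y}^k}\sum_{i\in N_{q,y}^k} G_q^{i,k}$ with $G_q^i = \sum_{t=0}^{T-1} G_{q,t}^i$ (from $G_q^i=(y_{q,T}^i-y_{q,0}^i)/\beta$), and $d_q^k = 0$ on coordinates no client trains. Expanding the square gives
\[
\mathbb{E}\|y_{q+1}-y^*(x_q)\|^2 = \mathbb{E}\|y_q-y^*(x_q)\|^2 - 2\beta\,\mathbb{E}\langle d_q,\, y_q-y^*(x_q)\rangle + \beta^2\,\mathbb{E}\|d_q\|^2,
\]
so the task splits into lower-bounding the inner product to obtain the $(1-\beta\mu_g)$ contraction and upper-bounding $\mathbb{E}\|d_q\|^2$ to obtain the $\tfrac{4NT\beta^2}{\mathcal{C}_y^*}(\sigma_g^2+\delta_g^2)$ residual.

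For the cross term I would add and subtract the full global gradient $\nabla_y g(x_q,y_q)$. Since the stochastic gradients are unbiased (Assumption~\ref{assum:sto_sample}), in expectation $d_q$ aligns with $\nabla_y g(x_q,y_q)$ on the trained coordinates, up to the bias coming from evaluating each $G_{q,t}^i$ at the drifted intermediate iterate $y_{q,t}^i$ rather than at $y_q$. Applying $\mu_g$-strong convexity (Assumption~\ref{assum:convex}) to $\langle \nabla_y g(x_q,y_q),\, y_q-y^*(x_q)\rangle$ extracts the term $\mu_g\mathbb{E}\|y_q-y^*(x_q)\|^2$, and the $\ell_{g,1}$-smoothness bound together with the step restriction $\beta\leq 1/(2\ell_{g,1})$ is exactly what lets me absorb the $\beta^2\|\nabla_y g\|^2$ contribution and keep the factor $(1-\beta\mu_g)$.

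For the residual I would bound $\mathbb{E}\|d_q\|^2$ coordinate-wise. Using Cauchy--Schwarz over the contributing clients and $\mathcal{C}_{q,y}^k \ge \mathcal{C}_y^*$ gives $\frac{1}{(\mathcal{C}_{q,y}^k)^2}\big\|\sum_{i\in N_{q,y}^k} G_q^{i,k}\big\|^2 \le \frac{1}{\mathcal{C}_y^*}\sum_{i\in N_{q,y}^k}\|G_q^{i,k}\|^2$. Splitting each $G_q^{i,k}$ into its conditional mean and a zero-mean part, the stochastic fluctuations across the $T$ local steps are independent (fresh samples $\xi_t^i$), so their variances add to order $T\sigma_g^2$, while the gap between local and global means is controlled by the heterogeneity bound $\delta_g^2$ (Assumption~\ref{assum:hetero}); summing over the at most $N$ contributing clients yields the prefactor $NT/\mathcal{C}_y^*$.

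The step I expect to be the main obstacle is reconciling the $T$ local updates with the single-step contraction factor: the directions $G_{q,t}^i$ are evaluated at the drifted points $y_{q,t}^i$, so the cross term is not exactly $\langle \nabla_y g(x_q,y_q),\, y_q-y^*(x_q)\rangle$, and one must bound the accumulated client drift $\|y_{q,t}^i - y_{q,0}^i\|$ recursively in $\beta$ and $\ell_{g,1}$. My plan is to credit only one effective step of contraction while paying for all $T$ steps of injected variance, and to pick $\beta$ small enough (consistent with $\beta\leq 1/(2\ell_{g,1})$ and the later corollary choice $\beta\leq 1/(NTQ)$) that the drift and higher-order terms are dominated by the stated $\tfrac{4NT\beta^2}{\mathcal{C}_y^*}(\sigma_g^2+\delta_g^2)$ term, leaving precisely the claimed recursion.
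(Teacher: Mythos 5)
Your proposal is essentially the paper's argument --- one-step expansion of $\|y_{q+1}-y^*(x_q)\|^2$, a strong-convexity contraction from the cross term, absorption of the squared update norm under $\beta\le 1/(2l_{g,1})$, and a variance-plus-heterogeneity residual with prefactor $NT/\mathcal{C}_y^*$ --- but the two differ in one substantive piece of bookkeeping. You center the residual at the current iterate, so you are left with a $\beta^2\|\nabla_y g(x_q,y_q)\|^2$ term plus a client-drift correction, and you propose to dominate both by shrinking $\beta$ further (toward the corollary's $\beta\le 1/(NTQ)$); that would prove the recursion only under a stronger step-size condition than the stated $\beta\le 1/(2l_{g,1})$. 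The paper instead centers at $y^*(x_q)$: it splits each stochastic gradient into $\nabla_y g_j(\cdot,y_{q,t}^j,\xi)-\nabla_y g_j(\cdot,y^*(x_q),\xi)$ plus $\nabla_y g_j(\cdot,y^*(x_q),\xi)$, bounds the first piece by $2l_{g,1}$ times a function-value gap via the co-coercivity inequality $\|\nabla h(u)-\nabla h(v)\|^2\le 2l\left(h(u)-h(v)-\langle\nabla h(v),u-v\rangle\right)$, and lets that gap cancel exactly against the negative function-value gap produced by the cross term, the condition $\beta\le 1/(2l_{g,1})$ being precisely what makes the combined coefficient $2\beta NT(2\beta l_{g,1}-1)/\mathcal{C}^*$ nonpositive; the second piece costs only $\sigma_g^2+\delta_g^2$ because the global gradient vanishes at $y^*(x_q)$. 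This cancellation is what yields the clean constant $4NT\beta^2(\sigma_g^2+\delta_g^2)/\mathcal{C}_y^*$ with no leftover gradient-norm or drift terms, so if you want the lemma exactly as stated you should adopt the $y^*$-centering rather than pay for the drift with a smaller $\beta$. (To be fair, your worry about the drifted evaluation points $y_{q,t}^j$ is legitimate: the paper applies the convexity and co-coercivity inequalities at those drifted points while pairing them with $y_q^k-y^*(x_q)^k$ in the inner product, a step it does not justify carefully, so your explicit drift accounting is if anything more honest.)
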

\begin{proof}
\begin{align}
    &\mathbb{E} \left\|y_{q+1}-y^*(x_q)\right\|^2
    \nonumber
    \\
    =&\sum_{k\in \mathcal{K}_{q,y}}\mathbb{E}\|y_{q+1}^k-y^*(x_q)^k\|^2 
    \nonumber 
    \\
    =& \sum_{k\in \mathcal{K}_{q,y}}\mathbb{E}\|y_{q}^k-\frac{1}{\mathcal{C}^{s}_{q,y}}\sum_{j\in N^{s}_{q,y}}\sum_{t=1}^T\beta \nabla_y g^k_j(x_{q}^j,y_{q,t}^j,\xi_t^j)-y^*(x_q)^k\|^2 
    \nonumber
    \\
    =&\sum_{k\in \mathcal{K}_{q,y}}\mathbb{E}\left\|y_{q}^k-y^*(x_q)^k\right\|^2\nonumber\\&-2\beta\sum_{k\in \mathcal{K}_{q,y}}\mathbb{E}\langle \frac{1}{\mathcal{C}^{s}_{q,y}}\sum_{j\in N^{s}_{q,y}}\sum_{t=1}^T \nabla_y g^k_j(x_{q}^j,y_{q,t}^j,\xi_t^j),\nonumber\\&\quad y_{q}^k-y^*(x_q)^k\rangle\nonumber\\
    &+\beta^2\sum_{k\in \mathcal{K}_{q,y}}\mathbb{E}\|\frac{1}{\mathcal{C}^{s}_{q,y}}\sum_{j\in N^{s}_{q,y}}\sum_{t=1}^T \nabla_y g^k_j(x_{q}^j,y_{q,t}^j,\xi_t^j)\|^2
    \nonumber
    \\
    \leq& (1-\beta\mu_g)\sum_{k\in \mathcal{K}_{q,y}}\left\|y_{q}^k-y^*(x_q)^k\right\|^2\label{lemma:bounded-gradient}\\&-2\beta\frac{NT}{\mathcal{C}^*} \sum_{k\in \mathcal{K}_{q,y}}\mathbb{E}\left[g(x_q^k,y_{q,t}^k)-g(x_q^k,y^*(x_q)^k)\right]\nonumber\\
    &+\beta^2\sum_{k\in \mathcal{K}_{q,y}}\mathbb{E}\|\frac{1}{\mathcal{C}^{s}_{q,y}}\sum_{j\in N^{s}_{q,y}}\sum_{t=1}^T \nabla_y g^k_j(x_{q}^j,y_{q,t}^j,\xi_t^j)\|^2\nonumber
\end{align}
To bound the last term in Eq.~\ref{lemma:bounded-gradient}, we have
\begin{align}
    &\sum_{k\in \mathcal{K}_{q,y}}\mathbb{E}\|\frac{1}{\mathcal{C}^{s}_{q,y}}\sum_{j\in N^{s}_{q,y}}\sum_{t=1}^T \nabla_y g^k_j(x_{q}^j,y_{q,t}^j,\xi_t^j)\|^2
    \nonumber
    \\
    =&\sum_{k\in \mathcal{K}_{q,y}}\mathbb{E}\|\frac{1}{\mathcal{C}^{s}_{q,y}}\sum_{j\in N^{s}_{q,y}}\sum_{t=1}^T[ \nabla_y g^k_j(x_{q}^j,y_{q,t}^j,\xi_t^j)\nonumber\\&-\nabla_y g^k_j(x_{q}^j,y^*(x_{q}),\xi_t^j)+\nabla_y g^k_j(x_{q}^j,y^*(x_{q}),\xi_t^j)]\|^2
    \nonumber
    \\
    \leq& \frac{2}{\mathcal{C}^{s}_{q,y}}\sum_{j\in N^{s}_{q,y}}\sum_{t=1}^T\sum_{k\in \mathcal{K}_{q,y}}\mathbb{E}\|\nabla_y g^k_j(x_{q}^j,y_{q,t}^j,\xi_t^j)\nonumber\\
    &-\nabla_y g^k_j(x_{q}^j,y^*(x_{q}),\xi_t^j)\|^2\nonumber\\
    &+2\sum_{k\in \mathcal{K}_{q,y}}\mathbb{E}\|\frac{1}{\mathcal{C}^{s}_{q,y}}\sum_{j\in N^{s}_{q,y}}\sum_{t=1}^T\left[\nabla_y g^k_j(x_{q}^j,y^*(x_{q}),\xi_t^j)\right]\|^2
    \nonumber
    \\
    \leq& \frac{4l_{g,1}}{\mathcal{C}^{s}_{q,y}}\sum_{j\in N^{s}_{q,y}}\sum_{t=1}^T\sum_{k\in \mathcal{K}_{q,y}}\mathbb{E}[g^k_j(x_{q}^j,y_{q,t}^j,\xi_t^j)-g^k_j(x_{q}^j,y^*(x_{q}),\xi_t^j)\nonumber\\
    &-\langle \nabla_y g^k_j(x_{q}^j,y^*(x_{q}),\xi_t^j),y_{q,t}^j-y^{*}(x^q)\rangle]
    \nonumber
    \\
    &+2\sum_{k\in \mathcal{K}_{q,y}}\mathbb{E}\|\frac{1}{\mathcal{C}^{s}_{q,y}}\sum_{j\in N^{s}_{q,y}}\sum_{t=1}^T[\nabla_y g^k_j(x_{q}^j,y^*(x_{q}),\xi_t^j)]\|^2
    \nonumber
    \\=&4l_{g,1}\frac{NT}{\mathcal{C}^{*}_{y}}\sum_{k\in \mathcal{K}_{q,y}}\mathbb{E}\left[g\left(x^k,y^{k,t}\right)-g\left(x^k,y^*(x^k)\right)\right]\label{lemma:gradient-y*}\\
    &+2\sum_{k\in \mathcal{K}_{q,y}}\mathbb{E}\|\frac{1}{\mathcal{C}^{s}_{q,y}}\sum_{j\in N^{s}_{q,y}}\sum_{t=1}^T[\nabla_y g^k_j(x_{q}^j,y^*(x_{q}),\xi_t^j)]\|^2\nonumber
\end{align}
To bound the last term in Eq.~\ref{lemma:gradient-y*}, we have
\begin{align}
    &\sum_{k\in \mathcal{K}_{q,y}}\mathbb{E}\|\frac{1}{\mathcal{C}^{s}_{q,y}}\sum_{j\in N^{s}_{q,y}}\sum_{t=1}^T\left[\nabla_y g^k_j(x_{q}^j,y^*(x_{q}),\xi_t^j)\right]\|^2
    \nonumber
    \\
    \leq&\frac{2}{\mathcal{C}^{s}_{q,y}}\sum_{j\in N^{s}_{q,y}}\sum_{t=1}^T\sum_{k\in \mathcal{K}_{q,y}}\mathbb{E}\|\nabla_y g^k_j(x_{q}^j,y^*(x_{q}),\xi_t^j)\nonumber\\&-\nabla_y g^k_j(x_{q}^j,y^*(x_{q}))\|^2
    \nonumber
    \\
    &+\frac{2}{\mathcal{C}^{s}_{q,y}}\sum_{j\in N^{s}_{q,y}}\sum_{t=1}^T\sum_{k\in \mathcal{K}_{q,y}}\mathbb{E}\|\nabla_y g^k_j(x_{q}^j,y^*(x_{q}))\nonumber\\&-\nabla_y g\left(x_{q},y^*(x_{q})\right)\|^2\nonumber\\
    \leq& \frac{2NT(\sigma_{g}^2+\delta_{g}^2)}{\mathcal{C}^{*}_{y}}
    \nonumber
\end{align}
Thus, we can gain that
\begin{align}
    &\mathbb{E} \left\|y_{q+1}-y^*(x_q)\right\|^2
    \nonumber
    \\
    \leq& (1-\beta\mu_g)\sum_{k\in \mathcal{K}_{q,y}}\left\|y_{q}^k-y^*(x_q)^k\right\|^2\nonumber\\&-2\beta\frac{NT}{\mathcal{C}^*} \sum_{k\in \mathcal{K}_{q,y}}\mathbb{E}\left[g(x_q^k,y_{q,t}^k)-g(x_q^k,y^*(x_q)^k)\right]\nonumber\\&+\beta^2\sum_{k\in \mathcal{K}_{q,y}}\mathbb{E}\|\frac{1}{\mathcal{C}^{s}_{q,y}}\sum_{j\in N^{s}_{q,y}}\sum_{t=1}^T \nabla_y g^k_j(x_{q}^j,y_{q,t}^j,\xi_t^j)\|^2
    \nonumber
    \\
    \leq& (1-\beta\mu_g)\sum_{k\in \mathcal{K}_{q,y}}\left\|y_{q}^k-y^*(x_q)^k\right\|^2+\frac{4\beta^2NT(\sigma_{g}^2+\delta_{g}^2)}{\mathcal{C}^{*}_{y}}\nonumber\\&+\frac{2\beta NT(2\beta l_{g,1}-1)}{\mathcal{C}^*} \sum_{k\in \mathcal{K}_{q,y}}\mathbb{E}\left[g(x_q^k,y_{q,t}^k)-g(x_q^k,y^*(x_q)^k)\right]\nonumber
    \nonumber
\end{align}
Let $2\beta l_{g,1}-1\Rightarrow \beta\leq\frac{1}{2l_{g,1}}$ , we can get
\begin{align*}
     \mathbb{E}&\|y_{q+1}-y^*(x_q)\|^2\\ &\leq(1-\beta\mu_g)\mathbb{E}\|y_{q}-y^*(x_q)\|^2+\frac{4NT\beta^2}{\mathcal{C}_y^*}(\sigma_{g}^2+\delta_{g}^2)
\end{align*}
\end{proof}

\begin{lemma}
    Let all assumptions hold, we can bound,
    \begin{align}
&\mathbb{E}\left[\Phi(x_{q+1})\right]-\mathbb{E}\left[\Phi(x_q)\right]\\
&\leq-\frac{\alpha}{2}\sum_{s\in \mathcal{K}_{q,x}}\mathbb{E}\|\nabla \Phi^s(x_{q})\|^2+\frac{\alpha N}{\mathcal{C}_x^*}(2-\alpha+\alpha L_f)\sigma_f^2\nonumber\\
&\quad+\frac{\alpha N}{\mathcal{C}_x^*}(1+L_f)\delta^2_f+\frac{2\alpha N}{\mathcal{C}_x^*}(1-\alpha+\alpha L_f)\mathcal{E}_h\nonumber\\
&\quad+(-\alpha^3+\alpha^3 L_f)\frac{N}{\mathcal{C}_x^*}\sum_{s\in \mathcal{K}_{q,x}}\mathbb{E}\|\nabla f^s(x_{q},y_{q+1})\|^2\nonumber\\
&\quad+2\alpha M_f^2\mathbb{E}\|y_{q+1}-y^*(x_q)\|^2\nonumber
\end{align}
\end{lemma}
\begin{proof}
\begin{align*}
    & \mathbb{E}\left[\Phi(x_{q+1})\right]-\mathbb{E}\left[\Phi(x_q)\right] \nonumber\\
    \leq & \mathbb{E}\left[\left\langle x_{q+1} - x_q, \nabla \Phi(x_q) \right\rangle\right] +\frac{L_f}{2} \mathbb{E}\left\|x_{q+1} -x_q\right\|^2\\
    =&\sum_{s\in \mathcal{K}_{q,x}}\mathbb{E}[\langle x_{q+1}^s-x_{q}^s, \nabla \Phi^s(x_{q})\rangle]\nonumber\\
&+\sum_{s\in \mathcal{S}_x-\mathcal{K}_{q,x}}\mathbb{E}[\langle x_{q+1}^s-x_{q}^s,\nabla \Phi^s(x_{q})\rangle]+\frac{L_f}{2} \mathbb{E}\left\|x_{q+1} -x_q\right\|^2\\
    =&\sum_{s\in \mathcal{K}_{q,x}}\mathbb{E}[\langle x_{q+1}^s-x_{q}^s, \nabla \Phi^s(x_{q})\rangle]+\frac{L_f}{2} \sum_{s\in \mathcal{K}_{q,x}}\mathbb{E}\left\|x_{q+1}^s -x_q^s\right\|^2\nonumber\\
&+\frac{L_f}{2}\sum_{s\in \mathcal{S}_x-\mathcal{K}_{q,x}}\mathbb{E}\left\|x_{q+1}^s -x_q^s\right\|^2\\
    =&\sum_{s\in \mathcal{K}_{q,x}} \mathbb{E} \bigg[\Big\langle -\alpha\frac{1}{\mathcal{C}^{s}_{q,x}}\sum_{n\in N_{q,x}^{s}}\nabla f^s_n(x_{q}^n,y_{q+1}^n,\zeta_q^n),\nabla \Phi^s(x_{q})\Big\rangle\bigg]\nonumber\\
&+\frac{L_f}{2} \sum_{s\in \mathcal{K}_{q,x}}\mathbb{E}\left\|x_{q+1}^s -x_q^s\right\|^2\\
    =&(-\frac{1}{2\alpha}+\frac{L_f}{2})\underbrace{\sum_{s\in \mathcal{K}_{q,x}}\mathbb{E}\|x_{q+1}^s -x_q^s\|^2}_{U_1}-\frac{\alpha}{2}\sum_{s\in \mathcal{K}_{q,x}}\mathbb{E}\|\nabla \Phi^s(x_{q})\|^2\\
    &+\underbrace{\frac{\alpha}{2}\sum_{s\in \mathcal{K}_{q,x}}\mathbb{E}\|\frac{1}{\mathcal{C}^{s}_{q,x}}\sum_{n\in N^{s}_{q,x}}\nabla f^s_n(x_{q}^n,y_{q+1}^n,\zeta_q^n)-\nabla \Phi^s(x_{q})\|^2}_{U_2} 
\end{align*}
To bound $U_1$:
\begin{align*}
    &\sum_{s\in \mathcal{K}_{q,x}}\mathbb{E}\|x_{q+1}^s -x_q^s\|^2\\
    =&\sum_{s\in \mathcal{K}_{q,x}}\mathbb{E}\|-\alpha\frac{1}{\mathcal{C}^{s}_{q,x}}\sum_{n\in N_{q,x}^{s}}\nabla f^s_n(x_{q}^n,y_{q+1}^n,\zeta_q^n)\|^2\\
    =&\alpha^2\sum_{s\in \mathcal{K}_{q,x}}\mathbb{E}\|\frac{1}{\mathcal{C}^{s}_{q,x}}\sum_{n\in N^{s}_{q,x}}\nabla f^s_n(x_{q}^n,y_{q+1}^n,\zeta_q^n)\|^2\\
    = &\alpha^2\sum_{s\in \mathcal{K}_{q,x}}\mathbb{E}\|\frac{1}{\mathcal{C}^{s}_{q,x}}\sum_{n\in N^{s}_{q,x}}(\nabla f^s_n(x_{q}^n,y_{q+1}^n,\zeta_q^n)-\nabla f^s_n(x_{q}^n,y_{q+1}^n)\\
    &+\nabla f^s_n(x_{q}^n,y_{q+1}^n)-\nabla f^s_n(x_{q},y_{q+1})+\nabla f^s_n(x_{q},y_{q+1})\\
    &-\nabla f^s(x_{q},y_{q+1})+\nabla f^s(x_{q},y_{q+1}))\|^2 \\
    \leq&4\alpha^2\frac{N}{\mathcal{C}_x^*}\sigma_f^2+4\alpha^2\frac{N}{\mathcal{C}_x^*}\delta^2_f+4\alpha^2\frac{N}{\mathcal{C}_x^*}\mathcal{E}_h\\&+4\alpha^2\frac{N}{\mathcal{C}_x^*}\sum_{s\in \mathcal{K}_{q,x}}\mathbb{E}\|\nabla f^s(x_{q},y_{q+1})\|^2
\end{align*}
To bound $U_2$:
\begin{align*}
    &\frac{\alpha}{2}\sum_{s\in \mathcal{K}_{q,x}}\mathbb{E}\|\frac{1}{\mathcal{C}^{s}_{q,x}}\sum_{n\in N^{s}_{q,x}}\nabla f^s_n(x_{q}^n,y_{q+1}^n,\zeta_q^n)-\nabla \Phi^s(x_{q})\|^2\\
    \leq &2\alpha \frac{N}{\mathcal{C}_x^*}\sigma_f^2+2\alpha\frac{N}{\mathcal{C}_x^*}\mathcal{E}_h+2\alpha\frac{N}{\mathcal{C}_x^*}\delta^2_f\\&+ 2\alpha \sum_{s\in \mathcal{K}_{q,x}}\mathbb{E}\|\nabla f^s(x_{q},y_{q+1})-\nabla \Phi^s(x_{q})\|^2\\
    \leq &2\alpha \frac{N}{\mathcal{C}_x^*}\sigma_f^2+2\alpha\frac{N}{\mathcal{C}_x^*}\mathcal{E}_h+2\alpha\frac{N}{\mathcal{C}_x^*}\delta^2_f+2\alpha M_f^2 \mathbb{E}\|y_{q+1}-y^*(x_q)\|^2
\end{align*}
Thus, we can get
\begin{align*}
&\mathbb{E}\left[\Phi(x_{q+1})\right]-\mathbb{E}\left[\Phi(x_q)\right]
\\\leq&(-\frac{1}{2\alpha}+\frac{L_f}{2})(4\alpha^2\frac{N}{\mathcal{C}_x^*}\sigma_f^2+4\alpha^2\frac{N}{\mathcal{C}_x^*}\delta^2_f+4\alpha^2\frac{N}{\mathcal{C}_x^*}\mathcal{E}_h\\&+4\alpha^2\frac{N}{\mathcal{C}_x^*}\sum_{s\in \mathcal{K}_{q,x}}\mathbb{E}\|\nabla f^s(x_{q},y_{q+1})\|^2)\\
& -\frac{\alpha}{2}\sum_{s\in \mathcal{K}_{q,x}}\mathbb{E}\|\nabla \Phi^s(x_{q})\|^2+2\alpha \frac{N}{\mathcal{C}_x^*}\sigma_f^2+2\alpha\frac{N}{\mathcal{C}_x^*}\mathcal{E}_h\\&+2\alpha\frac{N}{\mathcal{C}_x^*}\delta^2_f+2\alpha M_f^2\mathbb{E}\|y_{q+1}-y^*(x_q)\|^2\\ 
=&-\frac{\alpha}{2}\sum_{s\in \mathcal{K}_{q,x}}\mathbb{E}\|\nabla \Phi^s(x_{q})\|^2+\frac{2L_f\alpha^2 N}{\mathcal{C}_x^*}(\sigma_f^2+\delta^2_f+\mathcal{E}_h)\\
&+(-2\alpha+2 L_f\alpha^2)\frac{N}{\mathcal{C}_x^*}\sum_{s\in \mathcal{K}_{q,x}}\mathbb{E}\|\nabla f^s(x_{q},y_{q+1})\|^2\\&+2\alpha M_f^2\mathbb{E}\|y_{q+1}-y^*(x_q)\|^2
\end{align*}
\end{proof}

\begin{lemma}
    Let all assumptions hold, we can gain,
    \begin{align}
    &\mathbb{E}\|y_{q+1}-y^*(x_{q+1})\|^2\\
    &\leq(2-2\beta\mu_g)\mathbb{E}\|y_{q}-y^*(x_q)\|^2+\frac{8NT\beta^2}{\mathcal{C}_y^*}(\sigma_{g}^2+\delta_{g}^2)\nonumber\\&+ 8L_y\alpha^2\frac{N}{\mathcal{C}_x^*}\sigma_f^2+8L_y\alpha\frac{N}{\mathcal{C}_x^*}\delta^2_f\nonumber\\
    &+8L_y\alpha^2\frac{N}{\mathcal{C}_x^*}\mathcal{E}_h+8L_y\alpha^3\frac{N}{\mathcal{C}_x^*}\sum_{s\in \mathcal{K}_{q,x}}\mathbb{E}\|\nabla f^s(x_{q},y_{q+1})\|^2\nonumber
    \end{align}
\end{lemma}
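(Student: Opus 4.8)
The plan is to control the inner error at the new iterate $y^*(x_{q+1})$ by separating two distinct sources of discrepancy: the residual of the local inner solver against the \emph{current} target $y^*(x_q)$, and the drift of the target itself induced by the outer step $x_q \to x_{q+1}$. First I would split, via Young's inequality (the factor-of-two form),
\begin{align*}
\mathbb{E}\|y_{q+1}-y^*(x_{q+1})\|^2 \le 2\,\mathbb{E}\|y_{q+1}-y^*(x_q)\|^2 + 2\,\mathbb{E}\|y^*(x_q)-y^*(x_{q+1})\|^2,
\end{align*}
which cleanly decouples the inner-optimization error from the moving-target error and matches the two groups of terms appearing in the claim.

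For the first term I would invoke Lemma~\ref{lemma:error_inner} verbatim, which bounds $\mathbb{E}\|y_{q+1}-y^*(x_q)\|^2$ by $(1-\beta\mu_g)\mathbb{E}\|y_q-y^*(x_q)\|^2 + \tfrac{4NT\beta^2}{\mathcal{C}_y^*}(\sigma_g^2+\delta_g^2)$; multiplying by $2$ produces exactly the contraction factor $(2-2\beta\mu_g)$ and the variance term $\tfrac{8NT\beta^2}{\mathcal{C}_y^*}(\sigma_g^2+\delta_g^2)$. For the second term I would use the $L_y$-Lipschitz continuity of $y^*(\cdot)$ established in the first preliminary lemma, giving $\|y^*(x_q)-y^*(x_{q+1})\|^2 \le L_y^2\|x_{q+1}-x_q\|^2$. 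Because the masked outer step only modifies coordinates in $\mathcal{K}_{q,x}$, one has $\|x_{q+1}-x_q\|^2 = \sum_{s\in\mathcal{K}_{q,x}}\|x_{q+1}^s-x_q^s\|^2$, which is precisely the quantity $U_1$ already bounded inside the proof of the per-iteration descent lemma by $4\alpha^2\tfrac{N}{\mathcal{C}_x^*}\big(\sigma_f^2+\delta_f^2+\mathcal{E}_h+\sum_{s\in\mathcal{K}_{q,x}}\mathbb{E}\|\nabla f^s(x_q,y_{q+1})\|^2\big)$. Substituting this estimate and collecting constants yields the remaining four terms of the claim, namely the contributions of $\sigma_f^2$, $\delta_f^2$, $\mathcal{E}_h$, and $\sum_{s}\mathbb{E}\|\nabla f^s(x_q,y_{q+1})\|^2$, each scaled by $\tfrac{N}{\mathcal{C}_x^*}$ and the Lipschitz/learning-rate factors (up to the bookkeeping of the constants $L_y$ and powers of $\alpha$).

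Combining the two contributions and taking total expectation gives the stated recursion. The hard part will be the target-shift term: one must be careful to feed the Lipschitz bound the \emph{restricted} displacement $\sum_{s\in\mathcal{K}_{q,x}}\|x_{q+1}^s-x_q^s\|^2$ rather than a naive full-dimensional $\|x_{q+1}-x_q\|^2$, and then reuse the parameter-wise aggregation estimate—governed by the minimum-covering constant $\mathcal{C}_x^*$—without double counting the coordinates the mask leaves untouched (for which $x_{q+1}^s=x_q^s$). Once the split and these two imported bounds are in place, the rest is routine algebra.
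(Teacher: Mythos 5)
Your proposal follows essentially the same route as the paper's own proof: the Young's-inequality split into the inner-solver residual $2\mathbb{E}\|y_{q+1}-y^*(x_q)\|^2$ plus the target drift $2\mathbb{E}\|y^*(x_q)-y^*(x_{q+1})\|^2$, Lemma~\ref{lemma:error_inner} for the first piece, the Lipschitz continuity of $y^*(\cdot)$ together with the restriction of $\|x_{q+1}-x_q\|^2$ to the trained coordinates $\mathcal{K}_{q,x}$ for the second, and the imported $U_1$ bound to finish. The only (cosmetic) divergence is that you correctly write $L_y^2\|x_{q+1}-x_q\|^2$ for the squared Lipschitz bound whereas the paper carries only a single factor of $L_y$ into its constants; your "up to bookkeeping" caveat covers this, and the argument is otherwise identical.
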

\begin{proof}
\begin{align*}
    &\mathbb{E}\|y_{q+1}-y^*(x_{q+1})\|^2\\
    \leq &2\mathbb{E}\|y_{q+1}-y^*(x_{q})\|^2+2\mathbb{E}\|y^*(x_{q})-y^*(x_{q+1})\|^2\\
    \leq&2\mathbb{E}\|y_{q+1}-y^*(x_{q})\|^2+2L_y\mathbb{E}\|x_{q+1}-x_{q}\|^2\\
    =&2\mathbb{E}\|y_{q+1}-y^*(x_{q})\|^2+2L_y\sum_{s\in \mathcal{K}_{q,x}}\mathbb{E}\|x_{q+1}-x_{q}\|^2\\&+2L_y\sum_{s\in \mathcal{S}_x-\mathcal{K}_{q,x}}\mathbb{E}\|x_{q+1}-x_{q}\|^2 \\
    =&2\mathbb{E}\|y_{q+1}-y^*(x_{q})\|^2+2L_y\sum_{s\in \mathcal{K}_{q,x}}\mathbb{E}\|x_{q+1}-x_{q}\|^2\\
    \leq&(2-2\beta\mu_g)\mathbb{E}\|y_{q}-y^*(x_q)\|^2+\frac{8NT\beta^2}{\mathcal{C}_y^*}(\sigma_{g}^2+\delta_{g}^2)\\&+ 8L_y\alpha^2\frac{N}{\mathcal{C}_x^*}\sigma_f^2+8L_y\alpha\frac{N}{\mathcal{C}_x^*}\delta^2_f+8L_y\alpha^2\frac{N}{\mathcal{C}_x^*}\mathcal{E}_h\\
    &+8L_y\alpha^3\frac{N}{\mathcal{C}_x^*}\sum_{s\in \mathcal{K}_{q,x}}\mathbb{E}\|\nabla f^s(x_{q},y_{q+1})\|^2
\end{align*}
\end{proof}

\textbf{Proof of Theorem~\ref{convergence1}}
\begin{proof}
We define the following Lyapunov function
\begin{align*}
    V_q:=\Phi(x_q)+\frac{M_f}{L_y}\|y_q-y^*(x_q)\|^2
\end{align*}
The bound of two Lyapunov functions is  
\begin{align*}
&V_{q+1}-V_{q}\\&=\Phi(x_{q+1})-\Phi(x_q)\\&\quad+\frac{M_f}{L_y}\left( \|y_{q+1}-y^*(x_{q+1})\|^2-\|y_q-y^*(x_q)\|^2\right)
\end{align*}  
From those lemmas, we can get  
\begin{subequations}
\begin{align}
    &\mathbb{E}\| V_{q+1}\|-\mathbb{E}\|V_{q} \|\nonumber\\&\leq-\frac{\alpha}{2}\sum_{s\in \mathcal{K}_{q,x}}\mathbb{E}\|\nabla \Phi^s(x_{q})\|^2+\frac{2L_f\alpha^2 N}{\mathcal{C}_x^*}(\sigma_f^2+\delta^2_f+\mathcal{E}_h)\nonumber\\
    &+(-2\alpha+2 L_f\alpha^2)\frac{N}{\mathcal{C}_x^*}\sum_{s\in \mathcal{K}_{q,x}}\mathbb{E}\|\nabla f^s(x_{q},y_{q+1})\|^2\nonumber\\
    &+(2\alpha M_f^2+2\frac{M_f}{L_y})\mathbb{E}\|y_{q+1}-y^*(x_q)\|^2-\frac{M_f}{L_y}\|y_q-y^*(x_q)\|^2\nonumber\\
    &+8\alpha^2M_f\frac{N}{\mathcal{C}_x^*}(\sigma_f^2+\delta^2_f+\mathcal{E}_h+\sum_{s\in \mathcal{K}_{q,x}}\mathbb{E}\|\nabla f^s(x_{q},y_{q+1})\|^2)\nonumber\\
    &\leq -\frac{\alpha}{2}\sum_{s\in \mathcal{K}_{q,x}}\mathbb{E}\|\nabla \Phi^s(x_{q})\|^2+\frac{\hat{\alpha}_1 N}{\mathcal{C}_x^*}(\sigma_f^2+\delta^2_f+\mathcal{E}_h)\nonumber\\
    &+(-2\alpha+2 L_f\alpha^2+8\alpha^2M_f)\frac{N}{\mathcal{C}_x^*}\sum_{s\in \mathcal{K}_{q,x}}\mathbb{E}\|\nabla f^s(x_{q},y_{q+1})\|^2\label{eq:lya_a_rabo}\\
    &+(2\alpha M_f^2+2\frac{M_f}{L_y})\mathbb{E}\|y_{q+1}-y^*(x_q)\|^2-\frac{M_f}{L_y}\|y_q-y^*(x_q)\|^2\label{eq:lya_b_rabo}
\end{align}
\end{subequations}
where $\hat{\alpha}_1=2L_f\alpha^2+8M_f\alpha^2$.

It's noticed that Equation~$(\ref{eq:lya_a_rabo})\leq 0$ if $\alpha\leq\frac{1}{L_f+4M_f}$ and we enforce $\alpha$ satisfies it in the following context.

Based on Lemma~\ref{lemma:error_inner}, Equation~(\ref{eq:lya_b_rabo}) can be bounded as
\begin{align}
    &Equation~(\ref{eq:lya_b_rabo})\nonumber\\&\leq(2\alpha M_f^2+2\frac{M_f}{L_y})\nonumber
    \\
    &\quad\cdot((1-\beta\mu_g)\sum_{k\in \mathcal{K}_{q,y}}\left\|y_{q}^k-y^*(x_q)^k\right\|^2 +\frac{4\beta^2NT(\sigma_{g}^2+\delta_{g}^2)}{\mathcal{C}^{*}_{y}})\nonumber
    \\
    &\quad-\frac{M_f}{L_y}\|y_q-y^*(x_q)\|^2
    \nonumber
    \\
    &= \frac{M_f}{L_y}\left((2\alpha L_y M_f+2)(1-\beta \mu_g)-1\right)\sum_{k\in \mathcal{K}_{q,y}}\left\|y_{q}^k-y^*(x_q)^k\right\|^2\nonumber
    \\
    &\quad+(2\alpha M_f^2+2\frac{M_f}{L_y})\frac{4\beta^2NT(\sigma_{g}^2+\delta_{g}^2)}{\mathcal{C}^{*}_{y}}
    \nonumber
\end{align}
If $\beta\leq\frac{1}{\mu_g}$ and $\beta \geq \frac{1}{\mu_g}-\frac{1}{2\alpha L_yM_f\mu_g}$, we can get  
\begin{align*}
\frac{M_f}{L_y}\left((2\alpha L_y M_f+2)(1-\beta \mu_g)-1\right)\leq0
\end{align*}
After rearranging, we can gain that
\begin{align*}
    &\mathbb{E}\| V_{q+1}\|-\mathbb{E}\|V_{q} \|\nonumber\\&\leq-\frac{\alpha}{2}\sum_{s\in \mathcal{K}_{q,x}}\mathbb{E}\|\nabla \Phi^s(x_{q})\|^2+\frac{\hat{\alpha}_1N}{\mathcal{C}_x^*}\sigma_f^2+\frac{\hat{\alpha}_1N}{\mathcal{C}_x^*}\delta^2_f\\
    &\quad+\frac{\hat{\alpha}_1N}{\mathcal{C}_x^*}(2M_f^2 w_1^2 \|x_q\|^2+2M_f^2 w_2^2 \|y_{q+1}\|^2)\\
    &\quad+(2\alpha M_f^2+2\frac{M_f}{L_y})\frac{4\beta^2NT(\sigma_{g}^2+\delta_{g}^2)}{\mathcal{C}^{*}_{y}}
\end{align*}
Summing the above result on $Q$ communication rounds, we can gain that 
\begin{align*}
     &\frac{1}{Q}\sum_{q=1}^Q\sum_{s\in \mathcal{K}_{q,x}}\mathbb{E}\|\nabla \Phi^s(x_{q})\|^2\\&\leq \frac{2\Delta}{\alpha Q}+\frac{2N\hat{\alpha}_1}{\alpha\mathcal{C}_x^*}\sigma_f^2+\frac{2N\hat{\alpha}_1}{\alpha\mathcal{C}_x^*}\delta^2_f\\&\quad+\frac{4N\hat{\alpha}_1}{\alpha\mathcal{C}_x^*}\frac{1}{Q}\sum_{q=1}^Q(M_f^2 w_1^2 \|x_q\|^2+M_f^2 w_2^2 \|y_{q+1}\|^2)\\
     &\quad+(4M_f^2+\frac{4M_f}{\alpha L_y})\frac{4\beta^2NT(\sigma_{g}^2+\delta_{g}^2)}{\mathcal{C}^{*}_{y}}
\end{align*}
where $\Delta=\mathbb{E}\|V_{Q+1}\|-\mathbb{E}\| V_{1}\|$.

Here, we enforce $\alpha\leq \sqrt{\frac{C_x^*}{Q}}$ for some positive $\hat{\alpha}$, and $\beta \leq \frac{1}{NTQ}$, we can get
\begin{align*}
     &\frac{1}{Q}\sum_{q=1}^Q\sum_{s\in \mathcal{K}_{q,x}}\mathbb{E}\|\nabla \Phi^s(x_{q})\|^2\\\leq& \frac{2\Delta}{\sqrt{\mathcal{C}_x^*Q}}+\frac{4N(L_f+4M_f)}{\sqrt{\mathcal{C}_x^*Q}}\sigma_f^2+\frac{4N(L_f+4M_f)}{\sqrt{\mathcal{C}_x^*Q}}\delta^2_f\\&+\frac{8N(L_f+4M_f)}{\sqrt{\mathcal{C}_x^*Q}}\frac{1}{Q}\sum_{q=1}^Q(M_f^2 w_2^2\|x_q\|^2+M_f^2 w_2^2 \|y_{q+1}\|^2)\\
     &+\frac{16M_f^2(\sigma_{g}^2+\delta_{g}^2)}{NTQ^2\mathcal{C}^{*}_{y}}+\frac{16M_f(\sigma_{g}^2+\delta_{g}^2)}{NTQ^{1.5}L_y\sqrt{\mathcal{C}_x^*}\mathcal{C}^{*}_{y}}
\end{align*}
Thus, we can gain 
\begin{align*}
    &\frac{1}{Q}\sum_{q=1}^Q\sum_{s\in \mathcal{K}_{q,x}}\mathbb{E}\|\nabla \Phi^s(x_{q})\|^2\\&\leq \mathcal{O}(\frac{1}{\sqrt{\mathcal{C}_x^*Q}}+\frac{1}{NTQ^2\mathcal{C}^{*}_{y}}+\frac{1}{NTQ^{1.5}\sqrt{\mathcal{C}_x^*}\mathcal{C}^{*}_{y}})
\end{align*}
\end{proof}

\subsection{Convergence Analysis of RAFBO}
\begin{lemma}
    Let all assumptions hold, we can bound,
    \begin{align}
&\mathbb{E}\left[\Phi(x_{q+1})\right]-\mathbb{E}\left[\Phi(x_q)\right]\\
&\leq-\frac{\alpha}{2}\sum_{s\in \mathcal{K}_{q,x}}\mathbb{E}\|\nabla \Phi^s(x_{q})\|^2+\frac{\alpha N}{\mathcal{C}_x^*}(2-\alpha+\alpha L_f)\sigma_f^2\nonumber\\
&\quad+\frac{\alpha N}{\mathcal{C}_x^*}(1+L_f)\delta^2_f+\frac{2\alpha N}{\mathcal{C}_x^*}(1-\alpha+\alpha L_f)\mathcal{E}_h\nonumber\\
&\quad+(-\alpha^3+\alpha^3 L_f)\frac{N}{\mathcal{C}_x^*}\sum_{s\in \mathcal{K}_{q,x}}\mathbb{E}\|\nabla f^s(x_{q},y_{q+1})\|^2\nonumber\\
&\quad+2\alpha M_f^2\mathbb{E}\|y_{q+1}-y^*(x_q)\|^2\nonumber
\end{align}
\end{lemma}
\begin{proof}
\begin{align*}
    & \mathbb{E}\left[\Phi(x_{q+1})\right]-\mathbb{E}\left[\Phi(x_q)\right] \nonumber\\
    \leq & \mathbb{E}\left[\left\langle x_{q+1} - x_q, \nabla \Phi(x_q) \right\rangle\right] +\frac{L_f}{2} \mathbb{E}\left\|x_{q+1} -x_q\right\|^2\\
    =&(-\frac{1}{2\alpha}+\frac{L_f}{2})\underbrace{\sum_{s\in \mathcal{K}_{q,x}}\mathbb{E}\|x_{q+1}^s -x_q^s\|^2}_{U_1'}-\frac{\alpha}{2}\sum_{s\in \mathcal{K}_{q,x}}\mathbb{E}\|\nabla \Phi^s(x_{q})\|^2\\
    &+\underbrace{\frac{\alpha}{2}\sum_{s\in \mathcal{K}_{q,x}}\mathbb{E}\|\frac{1}{\mathcal{C}^{s}_{q,x}}\sum_{n\in N^{s}_{q,x}}\nabla f^s_n(x_{q}^n,y_{q+1}^n,\zeta_q^n)-\nabla \Phi^s(x_{q})\|^2}_{U_2'} 
\end{align*}
To bound $U_1'$:
\begin{align*}
    &\sum_{s\in \mathcal{K}_{q,x}}\mathbb{E}\|x_{q+1}^s -x_q^s\|^2\\
    =&\sum_{s\in \mathcal{K}_{q,x}}\mathbb{E}\|-\alpha\frac{1}{\mathcal{C}^{s}_{q,x}}\sum_{n\in N_{q,x}^{s}}\hat{\nabla} f^s_n(x_{q}^n,y_{q+1}^n,\zeta_q^n)\|^2\\
    \leq&5\alpha^2\frac{N}{\mathcal{C}_x^*}\sigma_{\mathcal{J}}^2+5\alpha^2\frac{N}{\mathcal{C}_x^*}\sigma_f^2+5\alpha^2\frac{N}{\mathcal{C}_x^*}\delta^2_f+5\alpha^2\frac{N}{\mathcal{C}_x^*}\mathcal{E}_h\\&+5\alpha^2\frac{N}{\mathcal{C}_x^*}\sum_{s\in \mathcal{K}_{q,x}}\mathbb{E}\|\nabla f^s(x_{q},y_{q+1})\|^2
\end{align*}
To bound $U_2'$:

\begin{align*}
    &\frac{\alpha}{2}\sum_{s\in \mathcal{K}_{q,x}}\mathbb{E}\|\frac{1}{\mathcal{C}^{s}_{q,x}}\sum_{n\in N^{s}_{q,x}}\nabla f^s_n(x_{q}^n,y_{q+1}^n,\zeta_q^n)-\nabla \Phi^s(x_{q})\|^2\\
    \leq &\frac{5}{2}\alpha \frac{N}{\mathcal{C}_x^*}\sigma_{\mathcal{J}}^2+\frac{5}{2}\alpha \frac{N}{\mathcal{C}_x^*}\sigma_f^2+\frac{5}{2}\alpha\frac{N}{\mathcal{C}_x^*}\mathcal{E}_h+\frac{5}{2}\alpha\frac{N}{\mathcal{C}_x^*}\delta^2_f\\&+\frac{5}{2}\alpha M_f^2 \mathbb{E}\|y_{q+1}-y^*(x_q)\|^2
\end{align*}
Thus, we can get
\begin{align*}
&\mathbb{E}\left[\Phi(x_{q+1})\right]-\mathbb{E}\left[\Phi(x_q)\right]\\
&\leq(-\frac{1}{2\alpha}+\frac{ L_f}{2})(5\alpha^2\frac{N}{\mathcal{C}_x^*}\sigma_{\mathcal{J}}^2+5\alpha^2\frac{N}{\mathcal{C}_x^*}\sigma_f^2+5\alpha^2\frac{N}{\mathcal{C}_x^*}\delta^2_f\\&\quad+5\alpha^2\frac{N}{\mathcal{C}_x^*}\mathcal{E}_h+5\alpha^2\frac{N}{\mathcal{C}_x^*}\sum_{s\in \mathcal{K}_{q,x}}\mathbb{E}\|\nabla f^s(x_{q},y_{q+1})\|^2)\\
&\quad -\frac{\alpha}{2}\sum_{s\in \mathcal{K}_{q,x}}\mathbb{E}\|\nabla \Phi^s(x_{q})\|^2+\frac{5}{2}\alpha \frac{N}{\mathcal{C}_x^*}\sigma_{\mathcal{J}}^2+\frac{5}{2}\alpha \frac{N}{\mathcal{C}_x^*}\sigma_f^2\\&\quad+\frac{5}{2}\alpha\frac{N}{\mathcal{C}_x^*}\mathcal{E}_h+\frac{5}{2}\alpha\frac{N}{\mathcal{C}_x^*}\delta^2_f+\frac{5}{2}\alpha M_f^2 \mathbb{E}\|y_{q+1}-y^*(x_q)\|^2\\ 
&=-\frac{\alpha}{2}\sum_{s\in \mathcal{K}_{q,x}}\mathbb{E}\|\nabla \Phi^s(x_{q})\|^2+\frac{5L_f\alpha^2 N}{2\mathcal{C}_x^*}(\sigma_{\mathcal{J}}^2+\sigma_f^2+\delta^2_f+\mathcal{E}_h)\\
&\quad+(-\frac{5}{2}\alpha+\frac{5}{2} L_f\alpha^2)\frac{N}{\mathcal{C}_x^*}\sum_{s\in \mathcal{K}_{q,x}}\mathbb{E}\|\nabla f^s(x_{q},y_{q+1})\|^2\\&\quad+\frac{5}{2}\alpha M_f^2\mathbb{E}\|y_{q+1}-y^*(x_q)\|^2
\end{align*}
\end{proof}

\begin{lemma}
    Let all assumptions hold, we can gain,
    \begin{align}
    &\mathbb{E}\|y_{q+1}-y^*(x_{q+1})\|^2\\
    \leq&(2-2\beta\mu_g)\mathbb{E}\|y_{q}-y^*(x_q)\|^2+\frac{8NT\beta^2}{\mathcal{C}_y^*}(\sigma_{g}^2+\delta_{g}^2)\nonumber\\
    &+ 10L_y\alpha^2\frac{N}{\mathcal{C}_x^*}\sigma_{\mathcal{J}}^2+10L_y\alpha^2\frac{N}{\mathcal{C}_x^*}\sigma_f^2+10L_y\alpha\frac{N}{\mathcal{C}_x^*}\delta^2_f\nonumber\\
    &+10L_y\alpha^2\frac{N}{\mathcal{C}_x^*}\mathcal{E}_h+10L_y\alpha^3\frac{N}{\mathcal{C}_x^*}\sum_{s\in \mathcal{K}_{q,x}}\mathbb{E}\|\nabla f^s(x_{q},y_{q+1})\|^2\nonumber
    \end{align}
\end{lemma}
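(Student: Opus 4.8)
The plan is to reproduce the structure of the analogous \textit{RABO} inner-drift lemma, but to substitute the outer-step bound that carries the finite-difference approximation error. First I would decompose the target by Young's inequality,
\begin{align*}
\mathbb{E}\|y_{q+1}-y^*(x_{q+1})\|^2 \le 2\,\mathbb{E}\|y_{q+1}-y^*(x_{q})\|^2 + 2\,\mathbb{E}\|y^*(x_{q})-y^*(x_{q+1})\|^2,
\end{align*}
which separates the lower-level optimization error at the \emph{fixed} outer iterate $x_q$ (first term) from the drift of the lower-level minimizer induced by the outer update $x_q\to x_{q+1}$ (second term).

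For the drift term I would invoke the $L_y$-Lipschitz continuity of $y^*$ established in the first lemma, bounding $\mathbb{E}\|y^*(x_q)-y^*(x_{q+1})\|^2$ by $L_y\,\mathbb{E}\|x_{q+1}-x_q\|^2$ (following the paper's convention). Since every untrained outer coordinate $s\in\mathcal{S}_x\setminus\mathcal{K}_{q,x}$ is frozen at iteration $q$, its contribution to $\|x_{q+1}-x_q\|^2$ vanishes, so the displacement collapses to the partial sum $\sum_{s\in\mathcal{K}_{q,x}}\mathbb{E}\|x_{q+1}^s-x_q^s\|^2$. The crucial substitution is then the bound on this partial sum proved in the immediately preceding \textit{RAFBO} descent lemma. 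Unlike the \textit{RABO} case, the outer update here uses the approximate hypergradient $\hat{\nabla}f$, so adding and subtracting $\hat{\nabla}f$, the exact local $\nabla f_n$, the full-model $\nabla f^s$, and the true $\nabla\Phi^s$ produces \emph{five} residuals rather than four; controlling the extra finite-difference residual by Lemma~\ref{hyper-gradient estimation} introduces the additional $\sigma_{\mathcal{J}}^2$ term and raises the leading coefficient from $8$ to $10$. Multiplying that factor-$5$ bound by $2L_y$ then reproduces the $10L_y\alpha^2 N/\mathcal{C}_x^*$ prefactors on $\sigma_{\mathcal{J}}^2$, $\sigma_f^2$, $\mathcal{E}_h$, and the squared-gradient sum.

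Finally I would bound the first term, $2\,\mathbb{E}\|y_{q+1}-y^*(x_q)\|^2$, by Lemma~\ref{lemma:error_inner} (valid under $\beta\le 1/2l_{g,1}$), which contributes the contraction factor $(2-2\beta\mu_g)$ multiplying $\mathbb{E}\|y_q-y^*(x_q)\|^2$ together with the inner-variance term $\tfrac{8NT\beta^2}{\mathcal{C}_y^*}(\sigma_g^2+\delta_g^2)$. Collecting the two contributions and reading off the coefficients yields the stated inequality. I expect the only real difficulty to be bookkeeping rather than any conceptual obstacle: one must correctly propagate the approximation variance $\sigma_{\mathcal{J}}^2$ through the $2L_y$ factor and match each power of $\alpha$ and each occurrence of $N/\mathcal{C}_x^*$ and $NT/\mathcal{C}_y^*$ to the corresponding term, exactly paralleling the \textit{RABO} counterpart while tracking the one extra error channel from the second-order-free estimator.
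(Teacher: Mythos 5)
Your proposal is correct and follows exactly the route the paper takes: the paper omits an explicit proof of this RAFBO lemma, but its proof of the RABO counterpart uses precisely your decomposition (Young's inequality, $L_y$-Lipschitzness of $y^*$, restriction of the outer displacement to $\mathcal{K}_{q,x}$, substitution of the $U_1'$ bound, and Lemma~\ref{lemma:error_inner} for the fixed-$x_q$ term), and your adaptation correctly tracks the fifth residual $\sigma_{\mathcal{J}}^2$ that lifts the coefficient from $8$ to $10$. If anything, your bookkeeping of the powers of $\alpha$ (uniformly $\alpha^2$ after multiplying the factor-$5$ bound by $2L_y$) is more consistent than the stated lemma, whose $\alpha\,\delta_f^2$ and $\alpha^3$ exponents appear to be typos inherited from the RABO version.
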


\textbf{Proof of Theorem~\ref{convergence2}}
\begin{proof}
We define the following Lyapunov function
\begin{align*}
    V_q:=\Phi(x_q)+\frac{M_f}{L_y}\|y_q-y^*(x_q)\|^2
\end{align*}
The bound of two Lyapunov functions is  
\begin{align*}
    &V_{q+1}-V_{q}\\&=\Phi(x_{q+1})-\Phi(x_q)\\&\quad+\frac{M_f}{L_y}\left( \|y_{q+1}-y^*(x_{q+1})\|^2-\|y_q-y^*(x_q)\|^2\right)
\end{align*}  
From those lemmas, we can get  
\begin{subequations}
\begin{align}
    &\mathbb{E}\| V_{q+1}\|-\mathbb{E}\|V_{q} \|\nonumber\\&\leq-\frac{\alpha}{2}\sum_{s\in \mathcal{K}_{q,x}}\mathbb{E}\|\nabla \Phi^s(x_{q})\|^2+\frac{5L_f\alpha^2 N}{2\mathcal{C}_x^*}(\sigma_{\mathcal{J}}^2+\sigma_f^2+\delta^2_f+\mathcal{E}_h)\nonumber\\
    &+(-\frac{5}{2}\alpha+\frac{5}{2} L_f\alpha^2)\frac{N}{\mathcal{C}_x^*}\sum_{s\in \mathcal{K}_{q,x}}\mathbb{E}\|\nabla f^s(x_{q},y_{q+1})\|^2\nonumber\\
    &+(\frac{5}{2}\alpha M_f^2+2\frac{M_f}{L_y})\mathbb{E}\|y_{q+1}-y^*(x_q)\|^2-\frac{M_f}{L_y}\|y_q-y^*(x_q)\|^2\nonumber\\
    &+10\alpha^2M_f\frac{N}{\mathcal{C}_x^*}\nonumber\\&\quad\cdot(\sigma_{\mathcal{J}}^2+\sigma_f^2+\delta^2_f+\mathcal{E}_h+\sum_{s\in \mathcal{K}_{q,x}}\mathbb{E}\|\nabla f^s(x_{q},y_{q+1})\|^2)\nonumber\\
    &\leq -\frac{\alpha}{2}\sum_{s\in \mathcal{K}_{q,x}}\mathbb{E}\|\nabla \Phi^s(x_{q})\|^2+\frac{\hat{\alpha}_2 N}{\mathcal{C}_x^*}(\sigma_{\mathcal{J}}^2+\sigma_f^2+\delta^2_f+\mathcal{E}_h)\nonumber\\
    &+(-\frac{5}{2}\alpha+\frac{5}{2} L_f\alpha^2+10\alpha^2M_f)\frac{N}{\mathcal{C}_x^*}\sum_{s\in \mathcal{K}_{q,x}}\mathbb{E}\|\nabla f^s(x_{q},y_{q+1})\|^2\label{eq:lya_a}\\
    &+(\frac{5}{2}\alpha M_f^2+2\frac{M_f}{L_y})\mathbb{E}\|y_{q+1}-y^*(x_q)\|^2-\frac{M_f}{L_y}\|y_q-y^*(x_q)\|^2\label{eq:lya_b}
\end{align}
\end{subequations}
where $\hat{\alpha}_2=\frac{5}{2}L_f\alpha^2+10M_f\alpha^2$.

It's noticed that Equation~$(\ref{eq:lya_a})\leq 0$ if $\alpha\leq\frac{1}{L_f+4M_f}$ and we enforce $\alpha$ satisfies it in the following context.

Based on Lemma~\ref{lemma:error_inner}, Equation~(\ref{eq:lya_b}) can be bounded as
\begin{align*}
    &Equation~(\ref{eq:lya_b})\nonumber\\&\leq(2\alpha M_f^2+2\frac{M_f}{L_y})\\&\quad\cdot((1-\beta\mu_g)\sum_{k\in \mathcal{K}_{q,y}}\left\|y_{q}^k-y^*(x_q)^k\right\|^2 +\frac{4\beta^2NT(\sigma_{g}^2+\delta_{g}^2)}{\mathcal{C}^{*}_{y}})\\&\quad-\frac{M_f}{L_y}\|y_q-y^*(x_q)\|^2
    \nonumber
    \\
    &= \frac{M_f}{L_y}\left((2\alpha L_y M_f+2)(1-\beta \mu_g)-1\right)\sum_{k\in \mathcal{K}_{q,y}}\left\|y_{q}^k-y^*(x_q)^k\right\|^2\\&\quad+(2\alpha M_f^2+2\frac{M_f}{L_y})\frac{4\beta^2NT(\sigma_{g}^2+\delta_{g}^2)}{\mathcal{C}^{*}_{y}}
    \nonumber
\end{align*}
If $\beta\leq\frac{1}{\mu_g}$ and $\beta \geq \frac{1}{\mu_g}-\frac{1}{2\alpha L_yM_f\mu_g}$, we can get  
\begin{align*}
\frac{M_f}{L_y}\left((2\alpha L_y M_f+2)(1-\beta \mu_g)-1\right)\leq0
\end{align*}
After rearranging, we can gain that
\begin{align*}
    &\mathbb{E}\| V_{q+1}\|-\mathbb{E}\|V_{q} \|
    \\& \leq-\frac{\alpha}{2}\sum_{s\in \mathcal{K}_{q,x}}\mathbb{E}\|\nabla \Phi^s(x_{q})\|^2+\frac{\hat{\alpha}_2N}{\mathcal{C}_x^*}\sigma_{\mathcal{J}}^2+\frac{\hat{\alpha}_2N}{\mathcal{C}_x^*}\sigma_f^2+\frac{\hat{\alpha}_2N}{\mathcal{C}_x^*}\delta^2_f\\
    &\quad+\frac{\hat{\alpha}_2N}{\mathcal{C}_x^*}(2M_f^2 w_1^2 \|x_q\|^2+2M_f^2 w_2^2 \|y_{q+1}\|^2)\\
    &\quad+(\frac{5}{2}\alpha M_f^2+2\frac{M_f}{L_y})\frac{4\beta^2NT(\sigma_{g}^2+\delta_{g}^2)}{\mathcal{C}^{*}_{y}}
\end{align*}

Summing the result of on $Q$ communication rounds, we can gain that
\begin{align*}
     \frac{1}{Q}&\sum_{q=1}^Q\sum_{s\in \mathcal{K}_{q,x}}\mathbb{E}\|\nabla \Phi^s(x_{q})\|^2\\\leq& \frac{2\Delta}{\alpha Q}+\frac{4N\hat{\alpha}_1}{\alpha\mathcal{C}_x^*}\frac{1}{Q}\sum_{q=1}^Q(M_f^2 w_1^2 \|x_q\|^2+M_f^2 w_2^2 \|y_{q+1}\|^2)\\&+(4M_f^2+\frac{4M_f}{\alpha L_y})\frac{4\beta^2NT(\sigma_{g}^2+\delta_{g}^2)}{\mathcal{C}^{*}_{y}}\\
     &+\frac{2N\hat{\alpha}_1}{\alpha\mathcal{C}_x^*}\sigma_{\mathcal{J}}^2+\frac{2N\hat{\alpha}_1}{\alpha\mathcal{C}_x^*}\sigma_f^2+\frac{2N\hat{\alpha}_1}{\alpha\mathcal{C}_x^*}\delta^2_f
\end{align*}
where $\Delta=\mathbb{E}\|V_{Q+1}\|-\mathbb{E}\| V_{1}\|$.

Here, we enforce $\alpha\leq \sqrt{\frac{C_x^*}{Q}}$ for some positive $\hat{\alpha}$, and $\beta \leq \frac{1}{NTQ}$, we can get
\begin{align*}
     \frac{1}{Q}&\sum_{q=1}^Q\sum_{s\in \mathcal{K}_{q,x}}\mathbb{E}\|\nabla \Phi^s(x_{q})\|^2\\\leq& \frac{2\Delta}{\sqrt{\mathcal{C}_x^*Q}}+\frac{4N(L_f+4M_f)}{\sqrt{\mathcal{C}_x^*Q}}\sigma_{\mathcal{J}}^2+\frac{4N(L_f+4M_f)}{\sqrt{\mathcal{C}_x^*Q}}\sigma_f^2\\&+\frac{4N(L_f+4M_f)}{\sqrt{\mathcal{C}_x^*Q}}\delta^2_f\\&+\frac{8N(L_f+4M_f)}{\sqrt{\mathcal{C}_x^*Q}}\frac{1}{Q}\sum_{q=1}^Q(M_f^2 w_2^2\|x_q\|^2+M_f^2 w_2^2 \|y_{q+1}\|^2)\\
     &+\frac{16M_f^2(\sigma_{g}^2+\delta_{g}^2)}{NTQ^2\mathcal{C}^{*}_{y}}+\frac{16M_f(\sigma_{g}^2+\delta_{g}^2)}{NTQ^{1.5}L_y\sqrt{\mathcal{C}_x^*}\mathcal{C}^{*}_{y}}
\end{align*}
Thus, we can gain 
\begin{align*}
    &\frac{1}{Q}\sum_{q=1}^Q\sum_{s\in \mathcal{K}_{q,x}}\mathbb{E}\|\nabla \Phi^s(x_{q})\|^2\\&\leq \mathcal{O}(\frac{1}{\sqrt{\mathcal{C}_x^*Q}}+\frac{1}{NTQ^2\mathcal{C}^{*}_{y}}+\frac{1}{NTQ^{1.5}\sqrt{\mathcal{C}_x^*}\mathcal{C}^{*}_{y}})
\end{align*}
\end{proof}

\section{Experiments}\label{sec:exp}
In this section, we assess the effectiveness of the proposed method on two distributed bilevel tasks: hyper-representation problem for few-shot learning and hyper-parameter optimization problem for loss function tuning.

\subsection{Hyper-representation for Few-shot Learning}\label{sec:few-shot}
In the hyper-representation learning task, we learn a hyper-representation of the data such that a linear classifier can be learned quickly with a small number of data samples.
We consider the following practical optimization problem,

\begin{center}
\scalebox{0.94}{$
\begin{aligned}
&\min_{x} \Phi(x)= 
\frac{1}{n} \sum_{i=1}^n\left(\frac{1}{N_m} \sum_{j=1}^{N_m} \mathcal{L}_{\mathcal{Q}_j}(x, y^*(x))\right) \nonumber
\\& \;\;\mbox{s.t.} \quad y^*(x)\in \arg\min_{y} 
\frac{1}{n}\sum_{i=1}^n\left(\frac{1}{N_m} \sum_{j=1}^{N_m} \mathcal{L}_{\mathcal{S}_j}(x,y))\right).
\end{aligned}$}  
\end{center}

In above formulation, we set $n$ clients and each client owns $N_m$ tasks, and each task concludes a support set and a query set. The function
$\mathcal{L}_{\mathcal{S}_j}(x, y)=\frac{1}{|\mathcal{S}_j|}\sum_{\xi\in\mathcal{S}_j}\mathcal{L}(x,y,\xi)$ and 
$\mathcal{L}_{\mathcal{Q}_j}(x, y^*(x))=\frac{1}{|\mathcal{Q}_j|}\sum_{\zeta\in\mathcal{Q}_j}\mathcal{L}(x,y^*(x),\zeta)$ correspond respectively to the support and query loss functions for local task $j$.
We use the Resnet12 backbone as $x$ for feature extraction and a MLP as $y$ for classify. Thus, the lower level problem is to learn the optimal linear classifier $y$ given the backbone $x$, and the upper level problem is to learn the optimal backbone parameter $x$.

\begin{table}[t]
    \centering
    \caption{Experimental results on N-way-K-shot classification on Omniglot dataset.}
    \label{tab:baselines}
    \begin{tabular}{lcccc} 
    \toprule
    \multirow{2}{*}{Algorithm} & \multicolumn{2}{c}{5-way Accuracy}& \multicolumn{2}{c}{20-way Accuracy}\\ 
    \cline{2-5}
    & 1-shot & 5-shot & 1-shot & 5-shot \\ 
    \midrule
    \rowcolor{mypurple} \multicolumn{5}{c}{\textit{\textcolor{gray}{Full model}}} \\
    Dist-ProtoNet & 80.76\% & 89.33\% & 60.05\% & 78.01\%\\
    Dist-MAML & 81.46\% & 92.01\% & 61.36\% & 78.35\%\\
    FedMBO\cite{conf/icml/fedmbo} & 82.66\% & 93.69\% & 63.18\% & 81.64\%\\
    \midrule
    \rowcolor{mypurple} \multicolumn{5}{c}{\textit{\textcolor{gray}{Sub model}}} \\
    Static & 72.34\% & 85.09\% & 42.99\% & 51.01\%\\
    OAP\cite{zhou2024every} & 73.58\% & 86.16\% & 45.45\% &51.92\%\\
    FedRolex\cite{alam2022fedrolex} & 73.74\% & 86.95\% & 47.25\% & 55.29\%\\
    FIARSE\cite{wu2024fiarse} & 74.33\% & 87.78\% & 49.52\% & 55.87\%\\
    \midrule
    \rowcolor{mypurple}RABO(Ours)& \textbf{76.05\%} & \textbf{89.12\%} & \textbf{54.97\%} & \textbf{77.41\%}\\ 
    \rowcolor{mypurple}RAFBO(Ours)& \textbf{75.09\% }& \textbf{88.57\%} & \textbf{51.37\%} & \textbf{71.16\%}\\
    \bottomrule
    \end{tabular}
\end{table}

We conduct few-shot meta-learning experiments on the Omniglot dataset~\cite{lake2015human}. The dataset comprises 1623 characters from 50 different alphabets, with each character containing 20 samples. We first partition the alphabets into training and testing sets with a 30/20 split. The characters in the training dataset are further divided into 10 shards, with each shard assigned to a client as its local dataset. We perform N-way-K-shot classification tasks. For each task, we randomly sample N characters, and for each character, we select K samples as the support set and 20-K samples as the query set.

We compare our proposed methods, \textit{RABO} and \textit{RAFBO}, with traditional distributed few-shot learning approaches. For full-model training comparisons, we select the following baselines: ProtoNet~\cite{snell2017prototypical}, a traditional few-shot learning algorithm; MAML~\cite{finn2017model}, a meta-training algorithm; and FedMBO~\cite{conf/icml/fedmbo}, a distributed bilevel optimization algorithm. For single-level submodel training, we include Static, OAP~\cite{zhou2024every}, FedRolex~\cite{alam2022fedrolex}, and FIARSE~\cite{wu2024fiarse} as baselines. We consider five different capacity settings: $\beta = \{1,1/2,1/4,1/8,1/16\}$. For instance, 1/2 means the client model capacity is half of the full model. For \textit{RABO} and \textit{RAFBO}, we first split the model according to the weight importance and each client selects a submodel that matches its capacity. And both feature extractor $x$ and classifier $y$ are followed the same model capacities setting. 

\begin{figure*}[t]
\begin{tabular}{@{\extracolsep{\fill}}c@{}c@{}c@{}c@{}@{\extracolsep{\fill}}}
\includegraphics[width=0.25\linewidth]{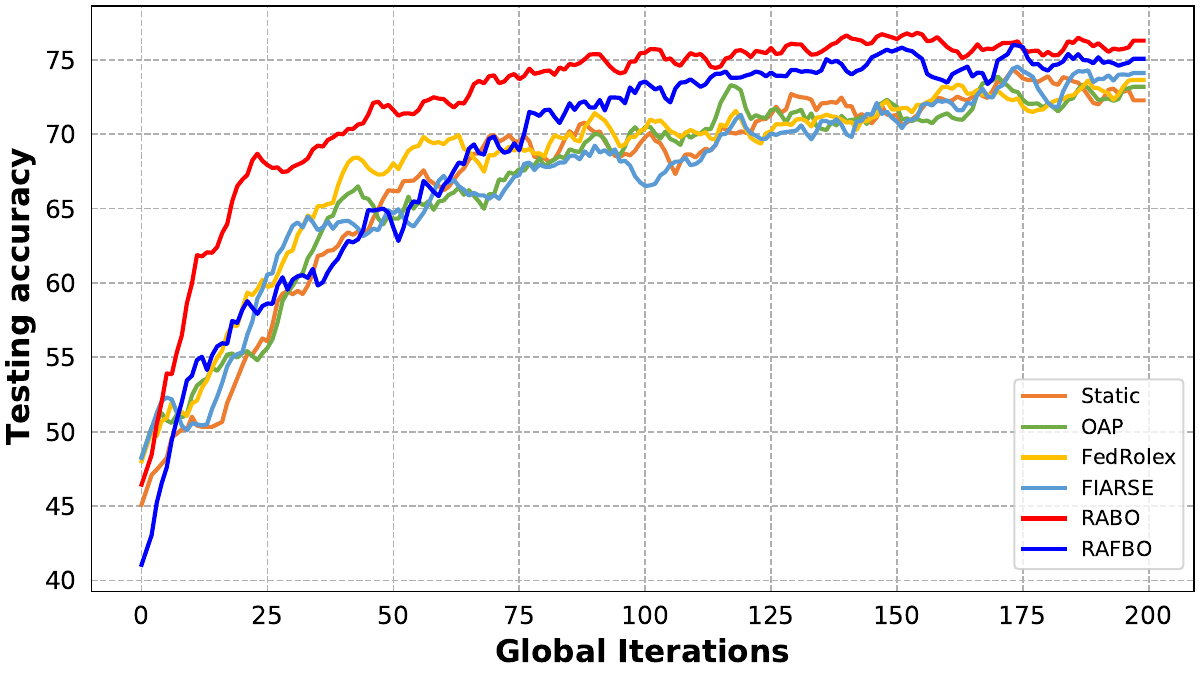}&
\includegraphics[width=0.25\linewidth]{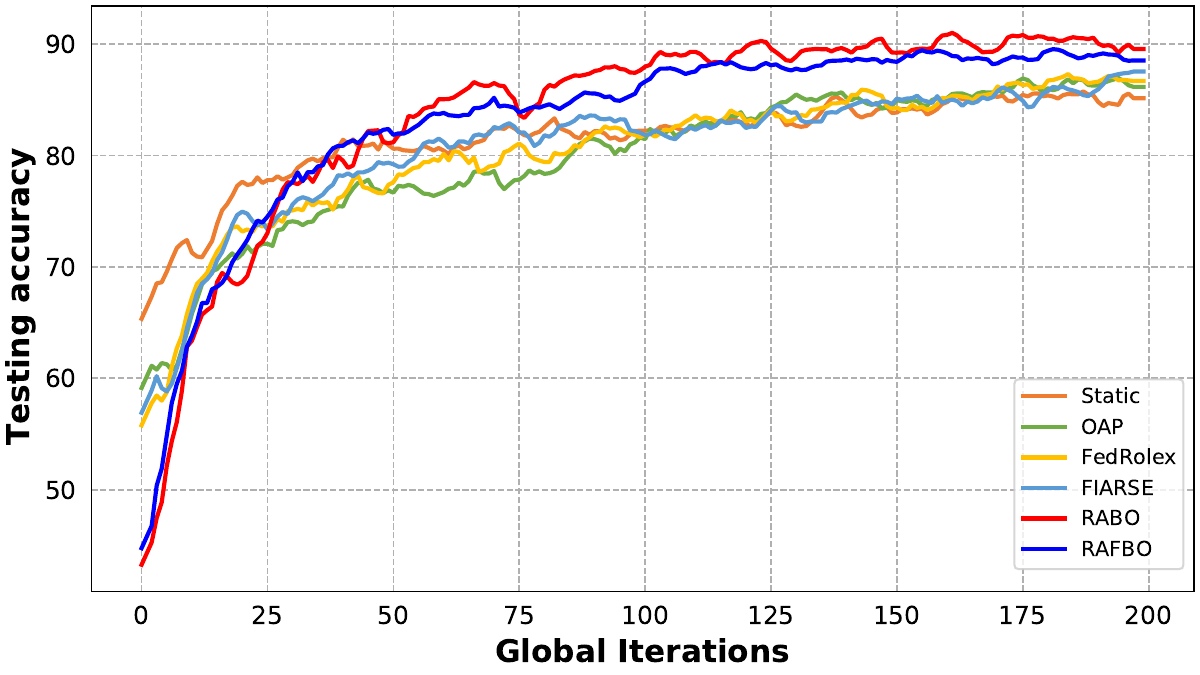}&
\includegraphics[width=0.25\linewidth]{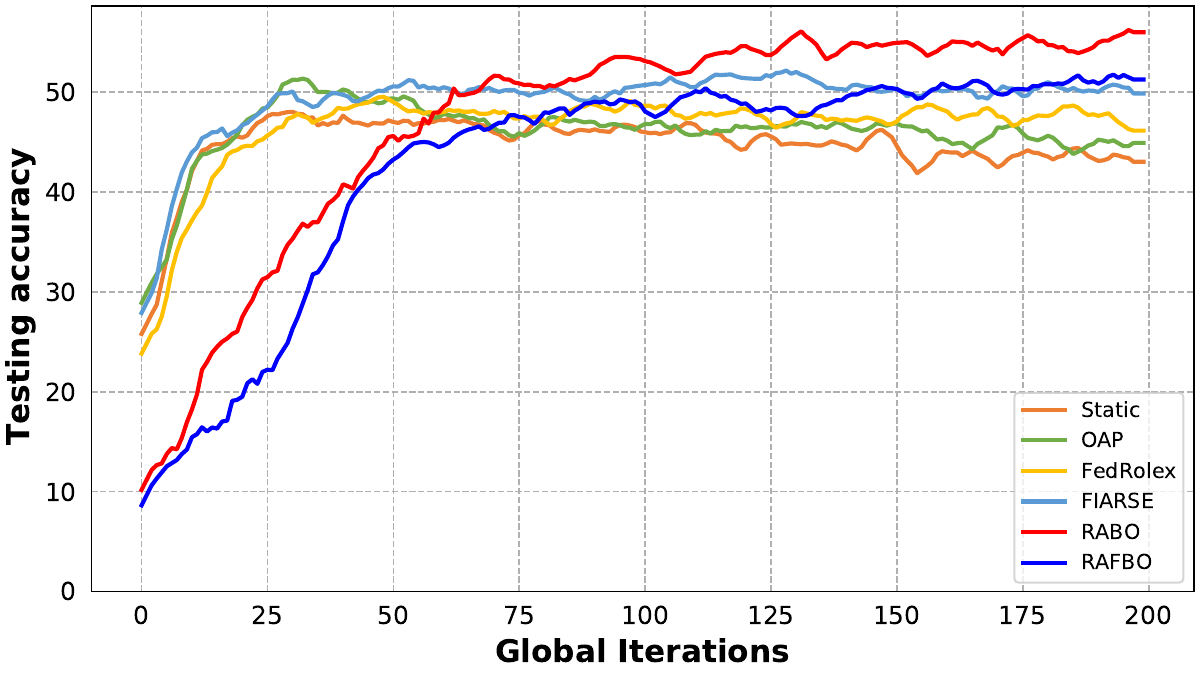}&
\includegraphics[width=0.25\linewidth]{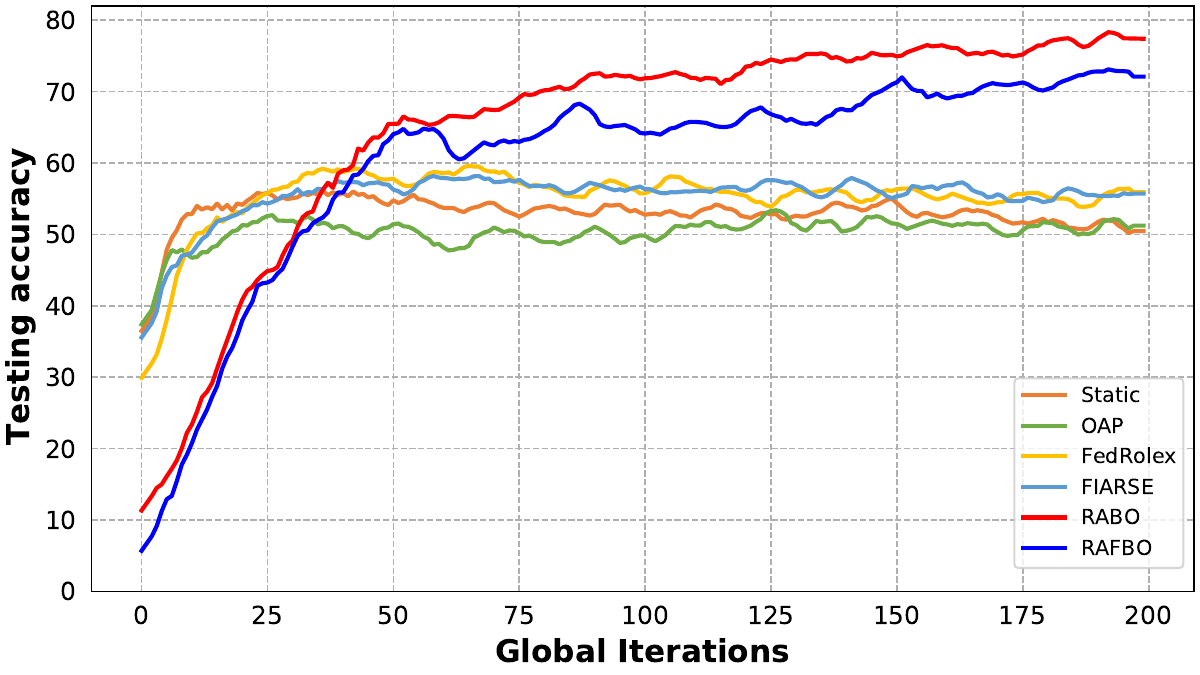}\\
(a) 5-way-1-shot & (b) 5-way-5-shot& (c) 20-way-1-shot& (d) 20-way-5-shot
\end{tabular}
{\caption{Comparison of different methods: the test accuracy v.s. \# global iterations in few-shot task}\label{fig:few_convergence}}
\end{figure*}

\textbf{Performance Evaluation.} The results on four N-way-K-shot settings are shown in Table~\ref{tab:baselines} and Figure~\ref{fig:few_convergence}. Overall, our proposed \textit{RABO} and \textit{RAFBO} achieves state-of-the-art performance compared to the baselines except for three full model training methods. \textit{RABO} improves 9.43\% on average and \textit{RAFBO} improves 6.59\%. Increasing N from 5 to 20 introduces more challenging tasks due to larger class diversity, our methods still perform better than other single-level submodel training methods. It's noticed that under 20-way-5-shot settings, \textit{RABO} achieves the competitive performance compared to distributed version of ProtoNet training with full model. Although the training model scale of \textit{RABO} is much smaller than it, it sufficiently shows the superior of bilevel optimization that learns a robust hyper-representation for fast adaptation. In Figure~\ref{fig:cost_few}, we further analyze the computation cost and communication cost compared to the FedMBO with full model. 
It can be seen that our proposed methods can achieve $5\times$ reduction in computation cost and $4\times$ reduction in communication cost separately. Specifically, the proposed \textit{RABO} accounts for 26.7\% of the computation cost, while \textit{RAFBO} accounts for 17.8\% when compared to FedMBO. Regarding the communication cost, both methods remarkably reduce it by 73.3\%.
These notable reductions can be attributed to the submodel training approach for both inner- and outer- parameters, where each client trains only a portion of the full model based on its capacity.

\textbf{Impact of $\mathcal{C}_x^*$.} To explore the impact of minimum covering number of outer parameters, we change the model capacity $\beta$ to $1/4$ that all clients only train $1/4$ outer parameters. And we manually control which segment each client trains to set $\mathcal{C}_x^*=\{1,2\}$. The results of 5-way-5-shot task are shown in Figure~\ref{fig:hyper}. We can observe that for both \textit{RABO} and \textit{RAFBO}, the result of $\mathcal{C}_x^*=2$ is better than $\mathcal{C}_x^*=1$. It means the more frequently the parameters are trained, the more knowledge from local data is integrated into the more part of global model, and leads to the faster convergence rate, which is identical to our theoretical analysis.
\begin{figure}[t]
\centering
\begin{minipage}{4cm}
\centering
\includegraphics[width=0.98\textwidth]{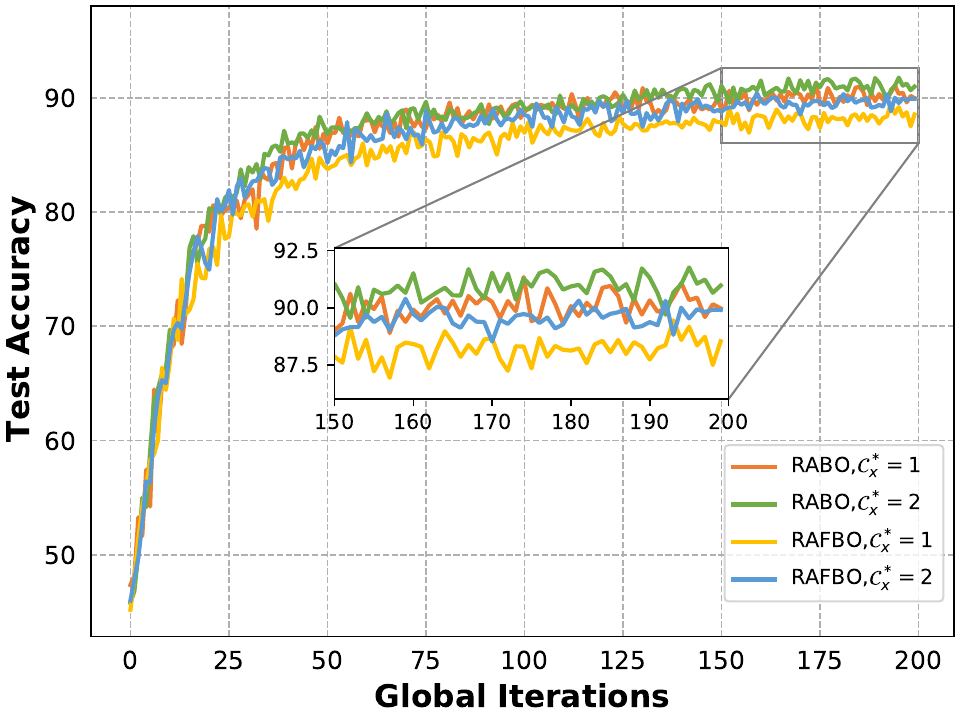}
\caption{The impact of different $\mathcal{C}^*_x$ in hyper-representation few-shot task.} 
\label{fig:hyper}
\end{minipage}
\begin{minipage}{4cm}
\centering
\includegraphics[width=0.98\textwidth]{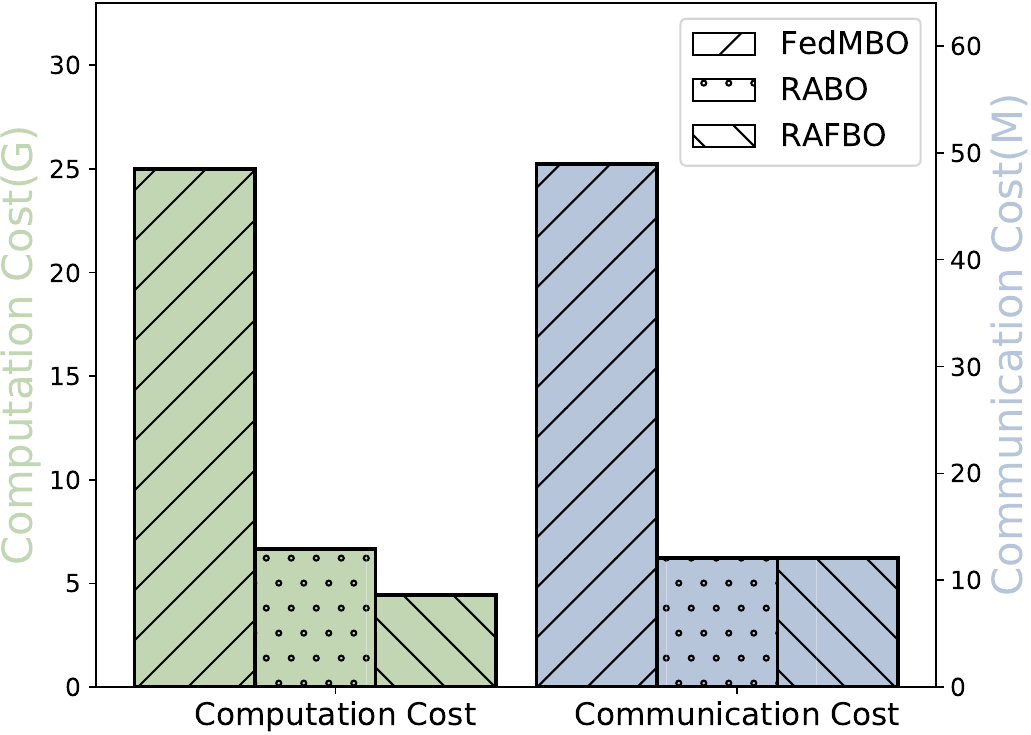}
\caption{Computation and communication cost comparison in few-shot task.} \label{fig:cost_few}
\end{minipage}
\end{figure}

\begin{figure}
\centering
\begin{minipage}{4cm}
\centering
\includegraphics[width=0.98\textwidth]{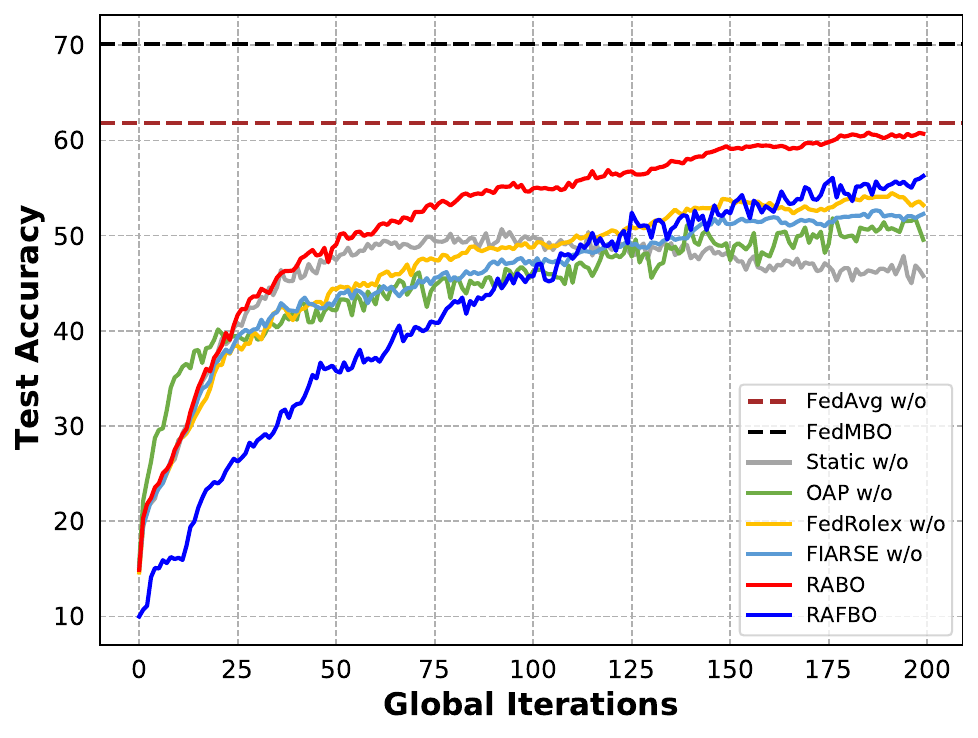}
\caption{Comparison of different methods in loss function tuning task.} \label{fig:convergence}
\end{minipage}
\begin{minipage}{4cm}
\centering
\includegraphics[width=0.98\textwidth]{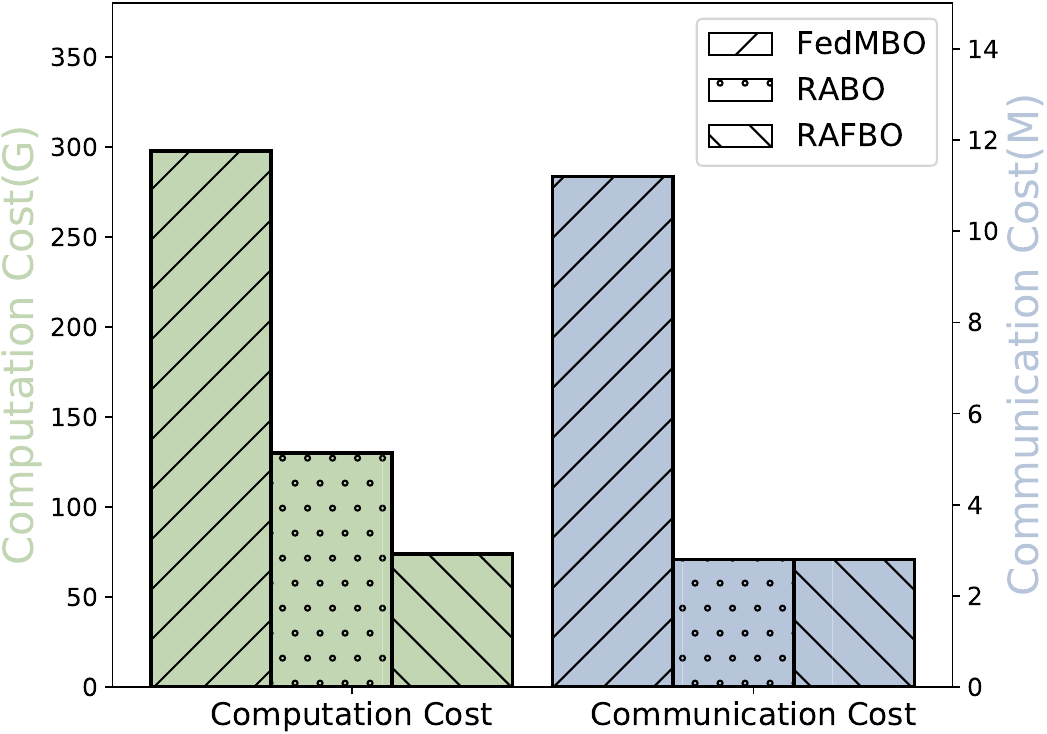}
\caption{Computation and communication cost comparison in loss function tuning task.} \label{fig:cost_tune}
\end{minipage}
\end{figure}

\subsection{Loss Function Tuning on Imbalanced Dataset}
In this section, we consider using the bilevel optimization to tune the hyper-parameters in the loss function for learning an imbalanced CIFAR-10~\cite{krizhevsky2009learning} dataset. We tune the VS-loss~\cite{kini2021label} in distributed setting as follows. VS loss is defined as $\mathcal{L}_{VS}(\textbf{x},y)=-w_k\log (e^{\Delta_kf_k(\alpha, y)+l_k}\big/\sum_{c\in[c]}e^{\Delta_cf_c(\alpha, y)+l_c})$, where $\textbf{x}=[\textbf{w}, \textbf{l}, \Delta]$ is the set of hyper parameters, $\alpha$ is the input data, $k$ is the label, $y$ is the parameter of model. Thus we can gain the optimization target as follows,
\begin{center}
\scalebox{0.93}{$
\begin{aligned}
&\min_{\textbf{x}} \Phi(\textbf{x})= 
\frac{1}{n} \sum_{i=1}^n\mathcal{L}_{VS}^{val}(\textbf{x},y^*(\textbf{x})) \nonumber
\\& \;\;\mbox{s.t.} \quad y^*(\textbf{x})\in \arg\min_{y} 
\frac{1}{n}\sum_{i=1}^n\mathcal{L}_{VS}^{tr}(\textbf{x},y).
\end{aligned}
$}
\end{center}
We create imbalanced training and validation datasets by following a long-tail distribution, where the number of examples per class decreases exponentially, that is $n_i'=n_i\mu^i$, where $n_i$ is the original number of examples for class i. We split it into 10 shards as local data. Unlike the few-shot learning task, we employ 80\% training data to train the inner model Resnet18 and 20\% validation data to fine tune the outer hyper parameters on each client. We compare the same baselines as in Section~\ref{sec:few-shot} and set all client capacities $\beta=1/2$.

\textbf{Performance Evaluation.} The overall performance is shown in Figure~\ref{fig:convergence}. We plot the test accuracy against global iterations to show the convergence and effectiveness of our proposed methods. The horizontal dashed lines serve as full model training baselines: black depicts accuracy reached
by distributed bilevel optimization algorithm FedMBO, and brown depicts accuracy in single-level optimization FedAVG without any loss tuning. Overall, our proposed \textit{RABO} and \textit{RAFBO} achieve state-of-the-art performance compared to baselines. \textit{RABO} improves 9.97\% on average and \textit{RAFBO} improves 5.53\%. In Figure~\ref{fig:cost_tune}, the brown dashed line is the accuracy without tuning the loss function and the black
dashed line is the accuracy with full model bilevel optimization. It's noticed that \textit{RABO} achieves a competitive performance compared to full model training FedAVG without loss tuning. Considering the computation cost, \textit{RABO} achieves $2\times$ reduction and \textit{RAFBO} achieves $4\times$ compared to FedMBO. For communication costs, both of them only take up 25\%. It demonstrates the computation and communication efficiency of our proposed methods.





\subsection{Scalability} In order to explore the scalability of our proposed methods \textit{RABO} and \textit{RAFBO}, we conduct the experiment that the number of clients is 10, 20, 30, 50 on \textit{few-shot learning} task. 
The results are shown in Table~\ref{tab:scalability}. We can find that as the number of clients increasing, the performance decreases. The main reason is the local amount of data samples is decreased, which further affect the performance of the local model. 

\begin{table}
\centering
\caption{Experimental results on scalability studies of proposed methods on few-shot task.}
\begin{tabular}{ccccc}
\toprule
\textbf{Method} & \textbf{10-Clients} & \textbf{20-Clients} & \textbf{30-Clients}& \textbf{50-Clients} \\ \midrule
\rowcolor{mypurple} \multicolumn{5}{c}{\textit{\textcolor{gray}{5-way-1-shot}}} \\
RABO & 76.05\% & 75.81\% & 74.39\%&72.76\% \\ 
RAFBO & 75.09\% & 74.85\% &73.04\%&71.80\% \\
\midrule
\rowcolor{mypurple} \multicolumn{5}{c}{\textit{\textcolor{gray}{5-way-5-shot}}} \\
RABO & 89.12 & 88.52 & 87.76 & 86.58 \\ 
RAFBO & 88.57 & 87.53 &87.03&86.44 \\
\bottomrule
\end{tabular}
\label{tab:scalability}
\end{table}

\section{Conclusion}
Distributed bilevel optimization faces the excessive computation challenge when applied in resource-limited clients. In this work, we designed a resource-adaptive efficient distributed bilevel optimization framework \textit{RABO} and based on it we proposed a new hypergradient estimator inspired by finite differences. We theoretically proved both of them can achieve an asymptotically optimal convergence rate $\mathcal{O}(1/\sqrt{\mathcal{C}^*_xQ})$ under the non-convex-strongly-convex condition. Extensive experiments on two tasks confirmed that the effectiveness and computation efficiency of proposed methods.

\bibliographystyle{plain}
\bibliography{mybib}
\begin{IEEEbiography}[{\includegraphics[width=1in,height=1.25in,clip,keepaspectratio]{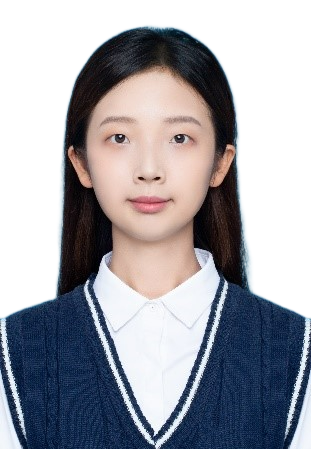}}]{Mingyi Li} is currently a Ph.D. student in the School of Computer Science and Technology, Shandong University. She received her B.S. degree in Shandong University. Her research interests include distributed collaborative learning and the theoretical optimization of distributed algorithms.
\end{IEEEbiography}

\begin{IEEEbiography}[{\includegraphics[width=1in,height=1.25in,clip,keepaspectratio]{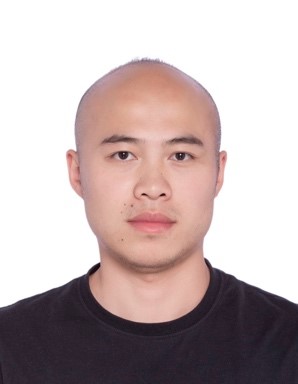}}]{Xiao Zhang} is now an associate professor in the School of Computer Science and Technology, Shandong University. His research interests include data mining, distributed  learning and federated learning. He has published more than 30 papers in the prestigious refereed journals and conference proceedings, such as IEEE Transactions on Knowledge and Data Engineering, IEEE Transactions on Mobile Computing, NeurIPS, SIGKDD, SIGIR, UBICOMP, INFOCOM, ACM MULTIMEDIA, IJCAI, and AAAI. 
\end{IEEEbiography}

\begin{IEEEbiography}[{\includegraphics[width=1in,height=1.25in,clip,keepaspectratio]{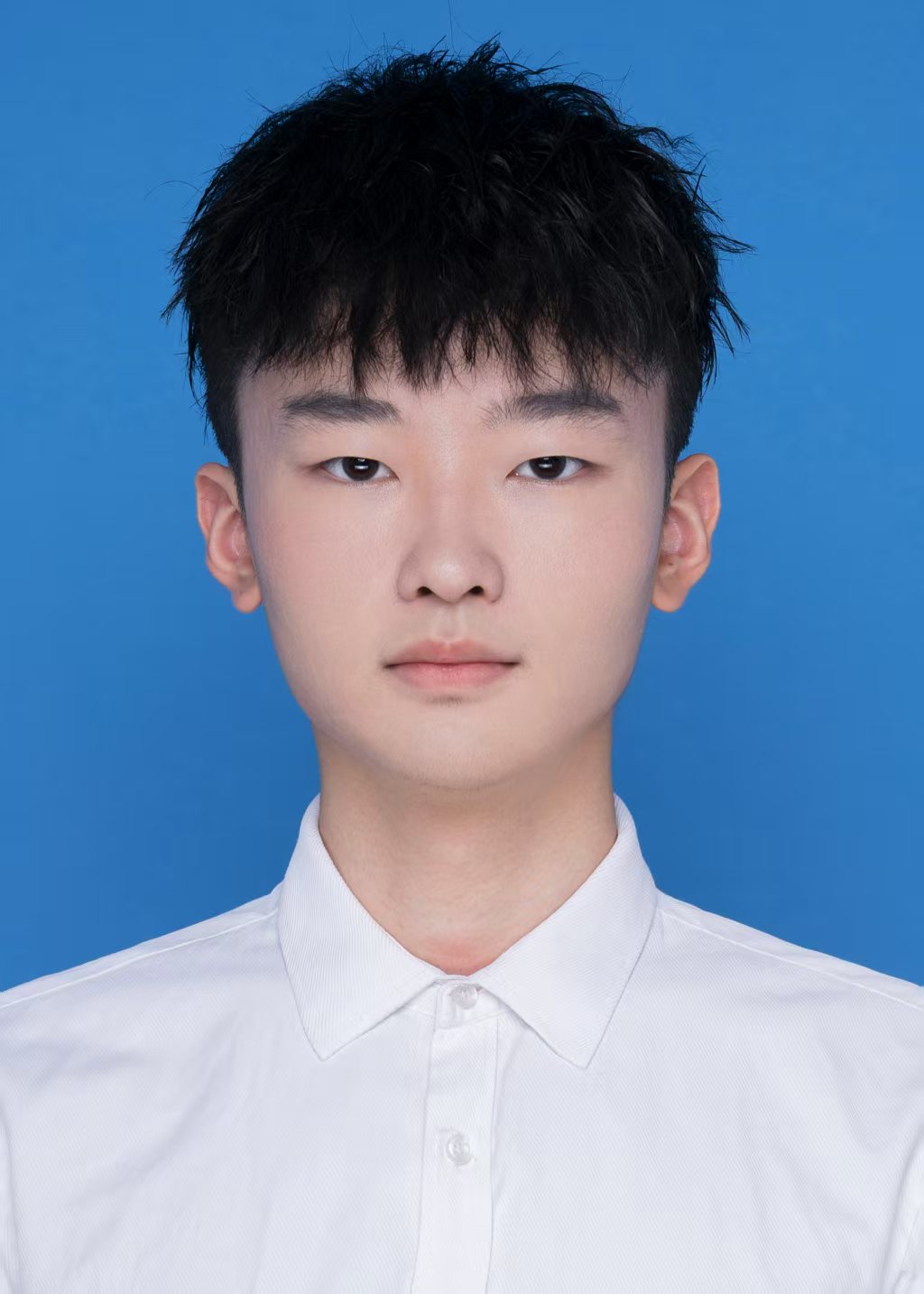}}] {Ruisheng Zheng} is currently an undergraduate student in the School of Cyber Science and Technology, Shandong University. His research interests include distributed learning and graph representation learning.
\end{IEEEbiography}

\begin{IEEEbiography}[{\includegraphics[width=1in,height=1.25in,clip,keepaspectratio]{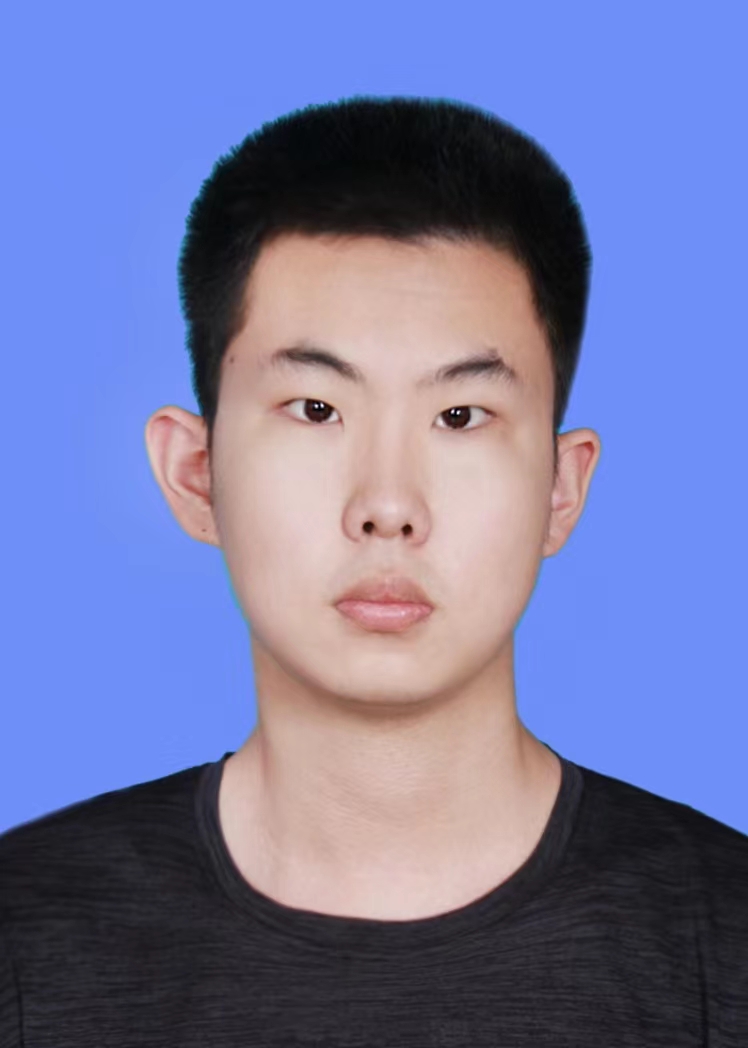}}]{Hongjian Shi} is currently an undergraduate student in the School of Computer Science and Technology, Shandong University. His research interests include distributed learning and intelligent computing network.
\end{IEEEbiography}

\begin{IEEEbiography}[{\includegraphics[width=1in,height=1.25in,clip,keepaspectratio]{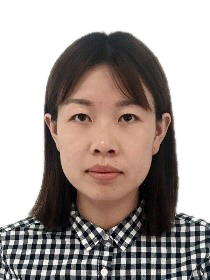}}]{Yuan Yuan} received the BSc degrees from the School of Mathematical Sciences, Shanxi University in 2016, and the Ph.D. degree from the School of  Computer Science and Technology, Shandong University, Qingdao, China, in 2021. She is currently a postdoctoral fellow at the Shandong University-Nanyang Technological University International Joint Research Institute on Artificial Intelligence, Shandong University. Her research interests include distributed computing and distributed machine learning.
\end{IEEEbiography}

\begin{IEEEbiography}[{\includegraphics[width=1in,height=1.25in,clip,keepaspectratio]{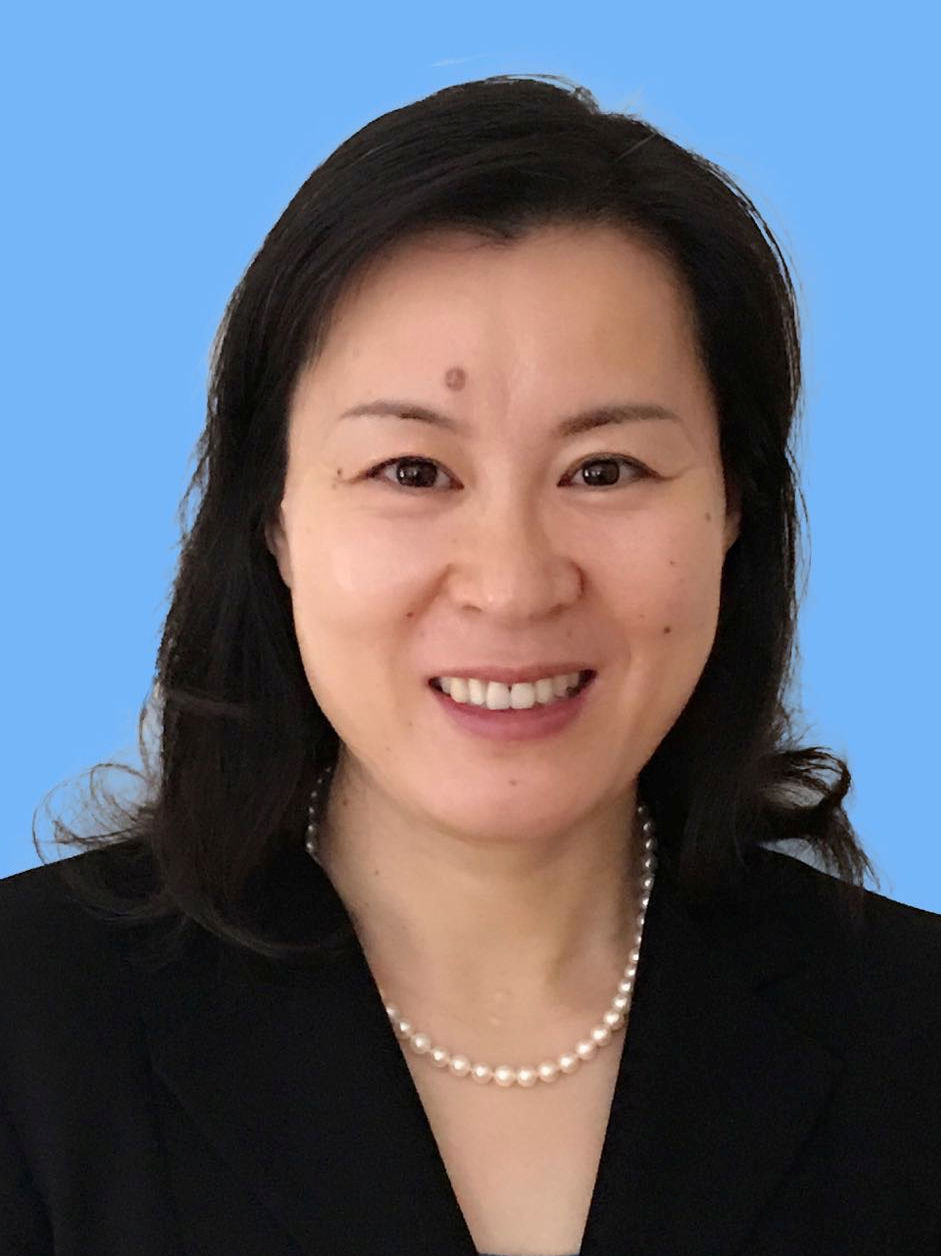}}]{Xiuzhen Cheng} (Fellow, IEEE) received the M.S. and Ph.D. degrees in computer science from the University of Minnesota-Twin Cities in 2000 and 2002, respectively. She is currently a Professor with the School of Computer Science and Technology, Shandong University. She worked as the Program Director of U.S. National Science Foundation (NSF) from April 2006 to October 2006 (full-time) and from April 2008 to May 2010 (part-time). Her current research interests include wireless and mobile security, cyber physical systems, wireless and mobile computing, sensor networking, and algorithm design and analysis. She is a member of ACM. She received the NSF CAREER Award in 2004. She also has chaired several international conferences. She has served on the editorial boards for several technical journals and the technical program committees for various professional conferences/workshops.
\end{IEEEbiography}

\begin{IEEEbiography}[{\includegraphics[width=1in,height=1.25in,clip,keepaspectratio]{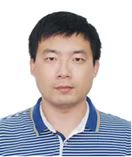}}]{Dongxiao Yu} received the B.S. degree in 2006 from the School of Mathematics, Shandong University and the Ph.D degree in 2014 from the Department of Computer Science, The University of Hong Kong. He became an associate professor in the School of Computer Science and Technology, Huazhong University of Science and Technology, in 2016. He is currently a professor in the School of Computer Science and Technology, Shandong University. His research interests include edge intelligence, distributed computing and data mining.
\end{IEEEbiography}
\end{document}